\theoremstyle{plain}
\newtheorem{thm}{\protect\theoremname}
  \theoremstyle{plain}
  \theoremstyle{plain}
  \newtheorem{lem}[thm]{\protect\lemmaname}
  \theoremstyle{plain}
  \newtheorem{prop}[thm]{\protect\propositionname}
  \theoremstyle{definition}
  \newtheorem{defn}[thm]{\protect\definitionname}
\newcounter{note}[section] 
\newcommand{\aanote}[1]{}
\newcommand{\awnote}[1]{}
\newcommand{\aw}[1]{#1}
\newcommand{\aad}[1]{#1}
\providecommand{\definitionname}{Definition}
  \providecommand{\lemmaname}{Lemma}
  \providecommand{\propositionname}{Proposition}
\providecommand{\theoremname}{Theorem}
\providecommand{\corollaryname}{Corollary}
\begin{document}
\global\long\def\L{\mathcal{L}}
 \global\long\def\eps{\epsilon}
 \global\long\def\P{\mathcal{P}}
 \global\long\def\Le{\L_{\mathrm{ext}}}
 \global\long\def\F{\mathcal{F}}
 \global\long\def\E{\mathcal{E}}
 \global\long\def\D{\mathcal{D}}
 \global\long\def\C{\mathcal{C}}
 \global\long\def\Q{\mathcal{Q}}
 \global\long\def\poly{\mathrm{poly}}
\global\long\def\T{\mathcal{T}}
\global\long\def\wb{\bar{w}}
\global\long\def\own{\mathrm{own}}
\global\long\def\opt{\mathrm{opt}}

\title{A QPTAS for Maximum Weight Independent Set of Polygons \\
with Polylogarithmically Many Vertices}
\date{$\,$}
\author{
Anna Adamaszek\footnote{Max-Planck-Institut f\"ur Informatik, Saarbr\"ucken, Germany,
       \texttt{\{anna,awiese\}@mpi-inf.mpg.de}}
\and
Andreas Wiese\footnotemark[1]
}

\maketitle
\thispagestyle{empty}
\begin{abstract}
The Maximum Weight Independent Set of Polygons (MWISP) problem is a fundamental problem in computational geometry.
Given a set of weighted polygons in the 
two-dimensional plane, the goal is to find a set of pairwise non-overlapping polygons with maximum
total weight.
Due to its wide range of applications and connections to other problems, the MWISP problem and its special cases have been extensively studied both in the approximation algorithms and the computational geometry community.
Despite a lot of research, its general case is not well-understood yet.
Currently the best known polynomial time algorithm achieves an
approximation ratio of $n^{\epsilon}$
[Fox and Pach, SODA 2011], and it is not even clear whether the problem is $\mathsf{APX}$-hard. We present a $(1+\epsilon)$-approximation
algorithm, assuming that each polygon in the input
has at most a polylogarithmic number of vertices. Our algorithm has quasi-polynomial
running time, i.e., it runs in time $2^{\mathrm{poly}(\log n,1/\eps)}$.
In particular, our result implies that for this setting the problem is \emph{not} $\mathsf{APX}$-hard, unless
$\mathsf{NP}\subseteq\mathsf{DTIME}(2^{\mathrm{poly}(\log n)})$. 

We use a recently introduced framework for approximating maximum weight independent
set in geometric intersection graphs. The framework
has been used to construct a QPTAS in the much simpler 
case of axis-parallel rectangles. We extend it in two ways, to adapt it to
our much more general setting. First, we show that its technical core can be reduced to
the case when all input polygons are triangles. Secondly, 
we replace its key technical ingredient which is 
a method to partition the plane using only few edges such that the
objects stemming from the optimal solution are evenly distributed
among the resulting faces and each object is intersected only a few times.
Our new procedure for this task is not even more complex than the original one
and, importantly, it can handle the arising difficulties due
to the arbitrary angles of the input polygons. Note that already this
obstacle makes the known analysis for the above framework fail. Also, in general
it is not 
well understood how to handle this difficulty by efficient approximation
algorithms.
\end{abstract}
\newpage{}

\setcounter{page}{1}

\section{Introduction}

One of the classical problems in combinatorial optimization is the
\textsc{Maximum Independent Set} problem. Given a graph, the goal
is to select a subset of the vertices such that no two selected vertices
are connected by an edge. The objective is to maximize the cardinality
or the total weight of the selected vertices. While this is a fundamental
problem, in its full generality it is essentially computationally
intractable for efficient algorithms: the problem is $\mathsf{NP}$-hard
to approximate within a factor of $O(n^{1-\eps})$ for any $\eps>0$~\cite{zuck07}.

However, when restricting to special graph classes, it is possible
to obtain polynomial time algorithms with much better approximation
guarantees. An important class are intersection graphs of two-dimensional
objects in the plane, i.e., 
where
each vertex in the graph corresponds
to a shape in the plane and two vertices are connected by an edge
if and only if their corresponding shapes 
overlap. Without loss
of generality, we can assume that the given shapes are (not necessarily
convex) polygons. In this paper, we study the resulting Maximum Weight
Independent Set of Polygons (MWISP) problem. It arises in many settings like map labeling~\cite{AKS1998,sack1999handbook,verweij1999optimisation},
chip manufacturing~\cite{Hochbaum1985}, cellular networks~\cite{clark1990unit}, 
unsplittable flow~\cite{anagnostopoulos2013constant,BSW11},
or data mining~\cite{fukuda2001data,KMP1998,lent1997clustering}.

Computationally, the MWISP problem is not well-understood. When interpreting
the problem equivalently as finding a set of non-intersecting curves
in the plane, the best known polynomial time algorithm achieves an
approximation ratio of $n^{\eps}$~\cite{FoxPach2011}.
On the other hand, the best known hardness result 
for the geometric setting is strong $\mathsf{NP}$-hardness.
This holds already if all 
objects
are straight 
line segments
with at most two different angles~\cite{kratochvil1990independent}.
Thus, there is a very large gap between the known approximation and hardness results for the
MWISP problem.
In this paper we completely close this gap under the assumption that the polygons do not have too many vertices (at most $(\log n)^{O(1)}$ many) 
and when allowing quasi-polynomial running time. 

\subsection{Our Contribution}

We present a $(1+\eps)$-approximation algorithm for
the maximum weight independent set of polygons problem, running in quasipolynomial
time 
$2^{\mathrm{poly}(\log n,1/\eps)}$,
where we assume that each polygon has at most a polylogarithmic
number of vertices. As mentioned above, in comparison the best known
polynomial time algorithm for this problem achieves an approximation
ratio of $n^{\eps}$. 
To the best of our knowledge, no better approximation
algorithms are known that run in quasi-polynomial time. 

We use a recently introduced framework for approximating maximum weight independent set in geometric intersection graphs~\cite{AW2013}. The framework has been introduced to construct a QPTAS in the much simpler case where all input objects are axis-parallel rectangles. 
However, for arbitrary polygons the problem is much more difficult and the techniques from~\cite{AW2013} alone are by far not sufficient. 
There are two major obstacles we have to overcome: handling arbitrary angles and dealing with more complex polygons, i.e., polygons with a super-constant number of vertices. 
\aw{
In fact, it is not well understood how to handle these obstacles by efficient algorithms.
For instance, if all polygons are axis-parallel rectangles,}
many algorithms are known that achieve an $O(\log n)$-approximation ratio or better~\cite{AKS1998,berman2001improved,CC2009,ChanHarPeled2012,KMP1998,N2000}. However, already for straight line segments with arbitrary angles, 
that are a special case of
non-axis-parallel rectangles, the best known polynomial time algorithm is the $n^{\eps}$-approximation mentioned above~\cite{FoxPach2011}. 
Generalizations of axis-parallel rectangles to polygons with more than just four axis-parallel edges are not much better understood either.
In particular, the LP approach (which has been extensively exploited in the case of axis-parallel rectangles) does not work in a more general setting.
Already for non-axis-parallel rectangles and axis-parallel L-shapes (i.e., polygons with six vertices each) the integrality gap of the LP can be as large as $\Omega(n)$. We show how to overcome 
the two major obstacles discussed above
and thus in particular contribute towards better algorithmic understanding of them.


Our algorithm is a dynamic
program parametrized by an integer $k$, where for our result we choose
$k=\poly(\log n,1/\eps)$. Its DP-table has an entry for each polygon
in the plane (not necessarily part of the input) with at most $k$
edges. In order to bound the size of the DP-table, we require the
corners of these polygons to be from some suitable polynomial size
set.
Each DP-cell represents the subproblem given by all input
polygons that are contained in the polygon associated with the DP-cell.
When computing the value for the cell, the dynamic program tries all
combinations to partition its corresponding polygon into up to $k$
smaller polygons with at most $k$ edges each, and selects the most
profitable combination according to the precomputed values for the
DP-cells of the smaller polygons.

For bounding the approximation ratio of our DP, 
we first show that we can reduce the general setting to the special
case where all polygons are triangles. In fact, it is even true that
if for some parameter $k$ our algorithm is a $(1-\eps)$-approximation for triangles,
it yields a $(1-K\cdot\eps)$-approximation for arbitrary $K$-gons.
Observe that the latter implication cannot be applied to arbitrary
algorithms that work well on triangles. 
Secondly, we show that for any feasible (and in particular the optimal) solution
for a problem instance where all polygons are triangles
there is a balanced cut with few edges that intersects only 
triangles with small total weight. Applying such a cut recursively for $O(\log n/\epsilon)$ levels
eventually subdivides the input plane into pieces such that each of them contains only a single 
triangle of the optimal solution. Note that the DP has to guess these cuts, so it is
very important that they have small complexity.

For showing that a balanced cut always exists, our technical core is to show
that there is a partition of the plane into faces,
using only a bounded number of edges, such that  
for each face of the partition the total weight of 
triangles from the optimal solution touching the face
is relatively small and each triangle
is intersected only a bounded number of times by edges of the faces. 
The techniques from~\cite{AW2013} for axis-parallel rectangles entirely fail here.
One key ingredient of the partition for 
rectangles is a procedure that 
stretches the input area non-uniformly along the two dimensions, ensuring that the rectangles 
contained in 
any thin vertical or horizontal stripe
of the input have small total weight. The 
partition is then constructed so that each face is contained in a bounded number of such stripes. 
For triangles with arbitrary angles, 
a stretching procedure preserving the combinatorial structure of the input
would turn them again into polygons with many bends. 
Even worse, when trying to adapt the construction from~\cite{AW2013}, the arbitrary 
angles of the triangles cause that one cannot guarantee that
the faces are contained in a bounded number of thin horizontal or vertical stripes, or even in 
thin stripes with a small number of suitably chosen directions.

Therefore, we provide a completely new construction 
of the partition which is much more general than the specialized version for axis-parallel rectangles,
and that is not even more complicated than the latter.
Altogether, this yields a QPTAS for the MWISP problem for arbitrary $K$-gons with $K=\poly(\log n)$. 
We note that our algorithm---the geometric dynamic program---might well be the right
algorithmic technique for better \emph{polynomial time} approximation algorithms
for MWISP, when parametrized by $k=O(1).$

\subsection{Related Work}

There is a lot of research on the maximum independent set problem
for geometric shapes in the plane. 
For the case when all shapes are unit disks (i.e.,
the resulting intersection graph is a unit disk graph),
polynomial time approximation schemes have
been presented by Hunt III et al.~\cite{PTAS_several_HuntIII}, 
and by Hochbaum and Maass~\cite{Hochbaum1985}. 
Subsequently, Nieberg, Hurink, and Kern~\cite{PTAS-indset} presented a PTAS that does 
not even need the geometric representation of the graph.
Observe here that it is $\mathsf{NP}$-hard to decide whether a given
graph is a unit disk graph~\cite{breu98_UDG_recognition}. Using
the geometric embedding, Erlebach, Jansen, and Seidel presented a 
PTAS for disks with arbitrary diameters, which generalizes to arbitrary fat objects~\cite{EJS2005}.

When going beyond \aad{the} fat objects, the problem
becomes much less understood.
Even for axis-parallel rectangles the best known polynomial time algorithms
are $O(\log n/\log\log n)$-approximation for weighted case due to Chan and Har-Peled~\cite{ChanHarPeled2012}, 
and a $O(\log\log n)$-approximation for unweighted case by Chalermsook and Chuzhoy~\cite{CC2009}.
Prior to the latter results, many $O(\log n)$-approximation algorithms
have been found~\cite{AKS1998,berman2001improved,KMP1998,N2000}. Very
recently, a QPTAS has been presented~\cite{AW2013}.

For the general case, the best known result is a $n^{\eps}$-approximation
algorithm for collections of curves due to Fox and Pach~\cite{FoxPach2011}, which assumes that any two curves
intersect only $\ell=O(1)$ times, i.e., that they are $\ell$-intersecting.
Prior to this, Agarwal and Mustafa presented an algorithm for the special case of 
straight line segments that computes an
independent set of size $(OPT/\log(2n/OPT))^{1/2}$~\cite{agarwal2006independent},
which yields a worst case approximation ratio of $n^{1/2+o(1)}$.
Note that it is $\mathsf{NP}$-hard to decide whether a given graph
can be represented as an intersection graph of a collection of curves
(i.e., is a string graph)~\cite{schaefer2003recognizing}.



\subsection{Problem Definition and Notation}

We are given a set of $n$ polygons $\P$ and an integer $K$. Each
polygon $P_{i}\in\P$ is a simple polygon specified by a sequence
of vertices $p_{i,1},...,p_{i,\ell}$, for some $\ell\le K$, each
vertex $p_{i,j}=(x_{i,j},y_{i,j})$ having integer coordinates $x_{i,j},y_{i,j}\in\mathbb{N}$.
Each polygon is considered as an open set. 
Also, each polygon $P_{i}$
has a weight $w(P_{i})>0$ associated with it. The goal is to find
a maximum weight independent set of polygons, i.e., a set $\P'\subseteq\P$
such that for all pairs of polygons $P,P'\in \P'$ with $P\ne P'$ it holds that $P\cap P'=\emptyset$. We
will show that for $K=\textrm{poly}(\log n)$ this problem admits a
QPTAS. Without loss of generality we assume that 
no three vertices of the input polygons
lie on a straight line
(i.e., the input points are in general position, see Appendix~\ref{apx:general-position} for details).

For any two points $p,p'$ we define $L[p,p']$ to be the straight
line segment connecting $p$ and $p'$. For two point sets $A,A'$ (which
could be lines, polygons, etc.) we say that $A$ \emph{touches} $A'$
if $A\cap A'\ne\emptyset$. We say that $A$ \emph{crosses $A'$ if
$A'\setminus A$ }has at least two connected components. 
For a given 
problem
instance (that will always be clear from the context)
we assume w.l.o.g.~that there is an integer $N$---not necessarily polynomially bounded---such that for
all points $p_{i,j}$ it holds that $p_{i,j}\in I:=[0,N]\times[0,N]$.
For a set of polygons $\P' \subseteq \P$ we define $w(\P'):=\sum_{P\in \P'}w(P)$.
Due to space constraints, the proof of the next lemma, as well as some of the other proofs, 
are in the appendix.


\begin{lem} \label{lem:bound-weights} By losing at most a factor of 
$1+O(\eps)$
we can assume that $1\le w(P)\le n/\eps$ for each polygon $P\in\P$. \end{lem}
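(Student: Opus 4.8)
The plan is to perform the standard weight-bucketing / rescaling argument. First I would handle the lower bound. Let $w_{\max} := \max_{P \in \P} w(P)$ be the largest weight; clearly the optimum solution $\opt$ satisfies $w(\opt) \ge w_{\max}$, since a single heaviest polygon is a feasible independent set. Now discard every polygon whose weight is at most $\eps \cdot w_{\max} / n$. There are at most $n$ such polygons, so the total weight removed is at most $\eps \cdot w_{\max} \le \eps \cdot w(\opt)$. Hence the optimum of the reduced instance is at least $(1-\eps)$ times the original optimum, and any $(1+O(\eps))$-approximation for the reduced instance lifts to a $(1+O(\eps))$-approximation for the original one. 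After this step, every remaining polygon $P$ has $w(P) > \eps \cdot w_{\max}/n$, i.e.\ the ratio between the largest and smallest weight is at most $n/\eps$.

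Next I would rescale. Divide all weights by the (new) minimum weight $w_{\min} := \min_{P} w(P)$; this does not change the set of optimal solutions or the approximation ratio, and now $1 \le w(P) \le n/\eps$ for every $P$ — except that the weights need not be integers, which is harmless for the statement as written, but if integrality is wanted one can round each $w(P)$ down to $\lfloor w(P) \rfloor$ (still $\ge 1$ since $w(P)\ge 1$), losing at most a factor $1+\eps$ more: indeed rounding changes each weight by less than $1$, and since every kept polygon has weight $\ge 1/\eps$ after scaling by an extra $1/\eps$ factor (or one simply notes $\sum_{P\in\opt}1 \le |\opt| \le n \le \eps\, w(\opt)$ is \emph{not} automatic, so instead scale up by $1/\eps$ first so that all weights are $\ge 1/\eps$ and the rounding loss is a $(1+\eps)$ factor). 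Combining the two multiplicative losses $1/(1-\eps)$ and $1+\eps$ gives the claimed $1+O(\eps)$.

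I do not expect any real obstacle here; this is a routine preprocessing step. The only point requiring a little care is the interaction between the discarding step (which needs $w(\opt) \ge w_{\max}$, valid since any single polygon is independent) and making the weights integral without blowing up the range beyond $n/\eps$ — this is why one rescales so that the minimum weight is a suitable power of the relevant parameter before rounding, absorbing the rounding error into the $O(\eps)$. Everything else is bookkeeping, and the resulting instance has all the properties needed for the dynamic program's DP-table size and running-time bounds stated later.
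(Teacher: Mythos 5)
Your proof is correct and follows essentially the same approach as the paper: both observe that $\mathrm{OPT} \ge w_{\max}$ because a single polygon is a feasible solution, discard the low-weight polygons (losing at most an $\eps$ fraction of the optimum), and rescale so the remaining weights lie in $[1, n/\eps]$; the paper simply performs the scaling before the discarding rather than after. Your digression about integrality is, as you yourself note, unnecessary — the lemma does not claim integer weights — and could have been omitted.
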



\section{Dynamic Program}

We describe our dynamic program, parametrized
by an integer parameter $k$, for approximating the MWISP problem. The dynamic program is a generalization of the
DP presented in~\cite{AW2013} for approximating the maximum weight independent
set of axis-parallel rectangles (rather than general polygons).

Intuitively, the DP has an entry in its DP-table for each polygon
with at most $k$ edges, 
possibly with holes, within the input square. To make the size
of the table polynomially bounded, the corners of the polygons are chosen from
a suitable, polynomially bounded set of points. The entry for each
polygon $Q$ represents a good approximation on the optimal solution
for the subproblem given by the input polygons from $\P$ that are
contained in $Q$. In order to compute an entry for a polygon $Q$,
the dynamic program tries all possibilities to subdivide $Q$ into
at most $k$ smaller polygons with at most $k$ edges each. 
In a leaf subproblem~$Q$, at most one polygon from $\P$ contained in $Q$ is chosen. 

To define the DP formally, we first define the set of possible corners
of the polygons in the DP-table, which we call the \emph{DP-points}.
For each polygon $P \in \P$ we fix a triangulation of $P$ and we denote by $\T(P)$ the resulting set of triangles.
Denote by $\E_T$ the set of all boundary edges of the triangles $\T(P)$ for all polygons $P \in \P$. (Notice that if all input polygons are triangles, $\E_T$ is the set of their boundary edges.) The set of \emph{basic DP-points} contains the four corners of the input square $I$ and each intersection of a vertical line $\{(x,y)|y\in\mathbb{R}\}$ such that $x=x_{i,j}$ for some polygon $P_{i}\in\P$ with a line $L\in\E_T$ or with the horizontal boundary edge of the input square, see Figure~\ref{fig:DP}.
The set of \emph{additional DP-points} contains 
all intersection points of pairs of line segments whose endpoints are basic DP-points, which are contained in $I$.
When we use the notion of DP-points, we refer to the set of all, basic \emph{and} additional DP-points.
Note that all
corners of polygons from $\P$ are basic DP-points and the number of all DP-points is upper
bounded by $O((nK)^{4})$.
Denote by $\Q$ the set of all polygons with at most $k$ edges and possibly with holes, whose corners are DP-points.
In the DP-table, we have an entry for each
polygon $Q \in \Q$. To distinguish polygons from the input $\P$ and polygons
of the DP-table, 
we will always denote the latter by~$Q$, $Q'$, etc.

\begin{prop} \label{prop:DP-cells} The number of DP-cells is at
most $(nK)^{O(k)}$. \end{prop}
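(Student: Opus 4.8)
The plan is to bound the size of $\Q$ directly by a counting argument. Recall that $\Q$ consists of all polygons with at most $k$ edges, possibly with holes, whose corners are DP-points. First I would recall the bound already established in the text: the total number of DP-points (basic and additional together) is $O((nK)^4)$. Indeed, the basic DP-points are intersections of one of the $O(nK)$ vertical lines $x = x_{i,j}$ with one of the $O(nK)$ edges in $\E_T$ (or a horizontal boundary edge of $I$), giving $O((nK)^2)$ of them; the additional DP-points are intersections of pairs of segments spanned by basic DP-points, of which there are at most $\binom{O((nK)^2)}{2}^{\,2}$-many candidate pairs, hence $O((nK)^4)$ points. So let $m := O((nK)^4)$ denote the number of DP-points.

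Next I would observe that any polygon $Q \in \Q$ — even one with holes — is completely determined by the cyclic sequences of DP-points forming its outer boundary and the boundaries of its holes, i.e.\ by an ordered list of at most $k$ DP-points (the total number of vertices, counting all boundary components, is at most $k$ since $Q$ has at most $k$ edges). Therefore the number of such polygons is at most the number of ordered tuples of length at most $k$ over a set of size $m$, together with the choice of how to partition the tuple into boundary components — but the latter only contributes a factor of at most $2^k$ (choosing which consecutive blocks form separate cycles). Hence
\[
|\Q| \;\le\; 2^k \sum_{j=0}^{k} m^{j} \;\le\; 2^k (k+1) m^{k} \;=\; \left(O((nK)^4)\right)^{k}\cdot 2^k (k+1) \;=\; (nK)^{O(k)},
\]
where in the last step we absorb the $2^k(k+1)$ factor and the constant in the exponent of $m$ into the $O(k)$ in the exponent (this is legitimate since $(nK)^{O(k)}$ already allows any polynomial dependence on $nK$ raised to a linear-in-$k$ power, and $2^k = (nK)^{O(k)}$ trivially as $nK \ge 2$). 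Since the DP-table has exactly one cell per polygon $Q \in \Q$, the number of DP-cells is $|\Q| \le (nK)^{O(k)}$, as claimed.

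The argument is entirely routine; the only point requiring a little care is the treatment of holes, to make sure that "at most $k$ edges" really does cap the total vertex count across all boundary components and that the number of ways to group vertices into cycles is only singly exponential in $k$ (not worse). Everything else is a direct substitution of the already-stated bound $m = O((nK)^4)$ into an elementary tuple count, so I do not anticipate any genuine obstacle.
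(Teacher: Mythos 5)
The paper omits a proof of this proposition (it is treated as an immediate counting consequence of the definitions), so there is no ``paper proof'' to compare against; your argument fills in exactly the routine bookkeeping one would expect, and it is correct. You identify a DP-cell with an ordered tuple of at most $k$ DP-points together with a way of grouping consecutive entries into boundary cycles, giving $|\Q|\le 2^k(k+1)\,m^k$ where $m$ is the number of DP-points, and then substitute the polynomial bound on $m$; since $m=(nK)^{O(1)}$, this yields $(nK)^{O(k)}$ as required. One minor arithmetic slip worth flagging: when you re-derive the count of additional DP-points, you write that $\binom{O((nK)^2)}{2}^{\,2}$ many pairs of segments give $O((nK)^4)$ points, but the square of $\binom{O((nK)^2)}{2}=O((nK)^4)$ is actually $O((nK)^8)$; the paper's stated bound of $O((nK)^4)$ for all DP-points appears to carry the same slip. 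This is immaterial here because any fixed polynomial bound $m=(nK)^{O(1)}$ suffices for the $(nK)^{O(k)}$ conclusion, but it is worth not propagating the wrong constant.
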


For each polygon $Q \in \Q$ the dynamic program computes
a set $sol(Q)$ of pairwise non-touching polygons from the set $\P_{Q} := \{P \in \P: P \subseteq Q\}$.
The set $sol(Q)$ is computed as follows. If $|\P_{Q}| \le 1$, we set $sol(P):=\P_{Q}$ and terminate. 
Otherwise, we enumerate all possible partitions of $Q$ into at most $k$
polygons from $\Q$, see Figure~\ref{fig:DP}.  By Proposition \ref{prop:DP-cells} we have
$|\Q| \le (nK)^{O(k)}$, so the number of potential partitions we need to consider is upper bounded
by ${(nK)^{O(k)} \choose k}=(nK)^{O(k^{2})}$. 
Also, for any set of at most $k$ polygons in $\Q$ we can check efficiently whether they form a partition of some larger polygon in $\Q$ (e.g., we can use as a skeleton the subdivision of the plane obtained by drawing all line segments between any pair of DP-points).
Let $\{ Q_{1},...,Q_{k'} \}$, with $k' \le k$, denote the partition that maximizes $\sum_{i=1}^{k'}w(sol(Q_{i}))$, and let $sol'(Q):=\cup_{i=1}^{k'}sol(Q_{i})$.
We set $sol(Q) := sol'(Q)$ if $w(sol'(Q))> \max_{P\in\P_Q}w(P)$,
and otherwise $sol(Q):=\{P_{\max}\}$ for a polygon $P_{\max} \in \P_Q$ with maximum profit in $\P_Q$.
At the end, the algorithm outputs the value in the DP-cell which corresponds
to the polygon containing the entire input region.
We call this algorithm \emph{GEO-DP}, like its specialization for axis-parallel rectangles given in~\cite{AW2013}.

\begin{figure}[t]
\begin{centering}
\includegraphics[scale=0.50]{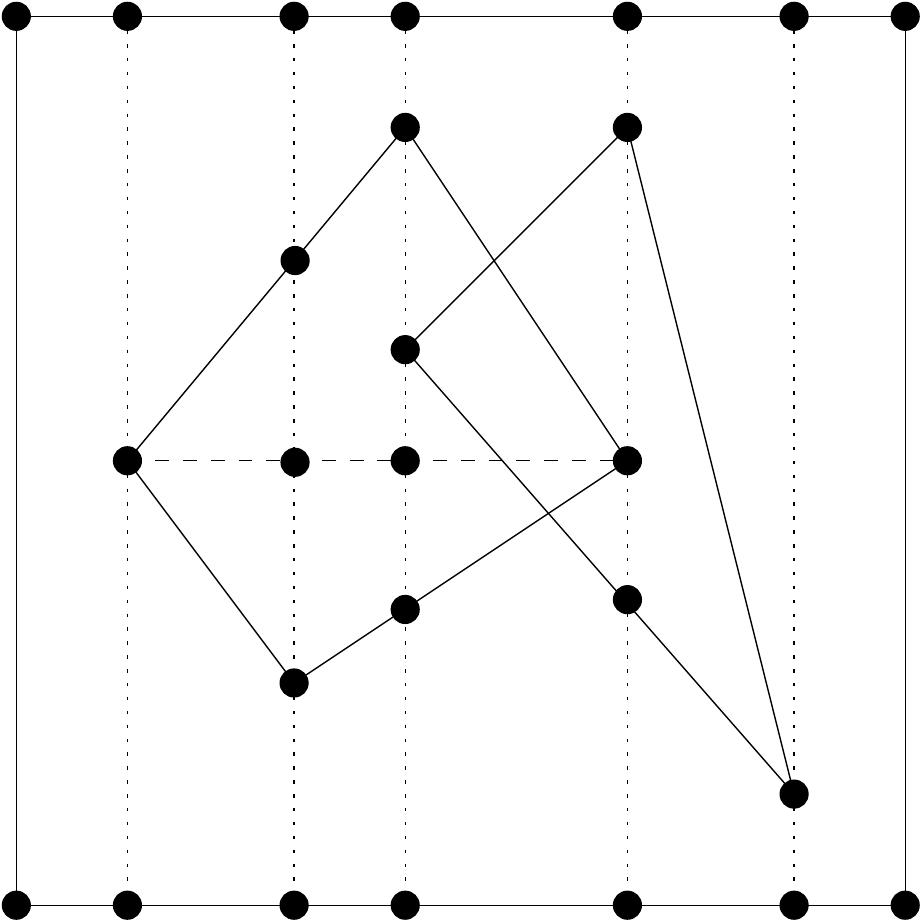}~~~~~~~~~~~~~\includegraphics[scale=0.55]{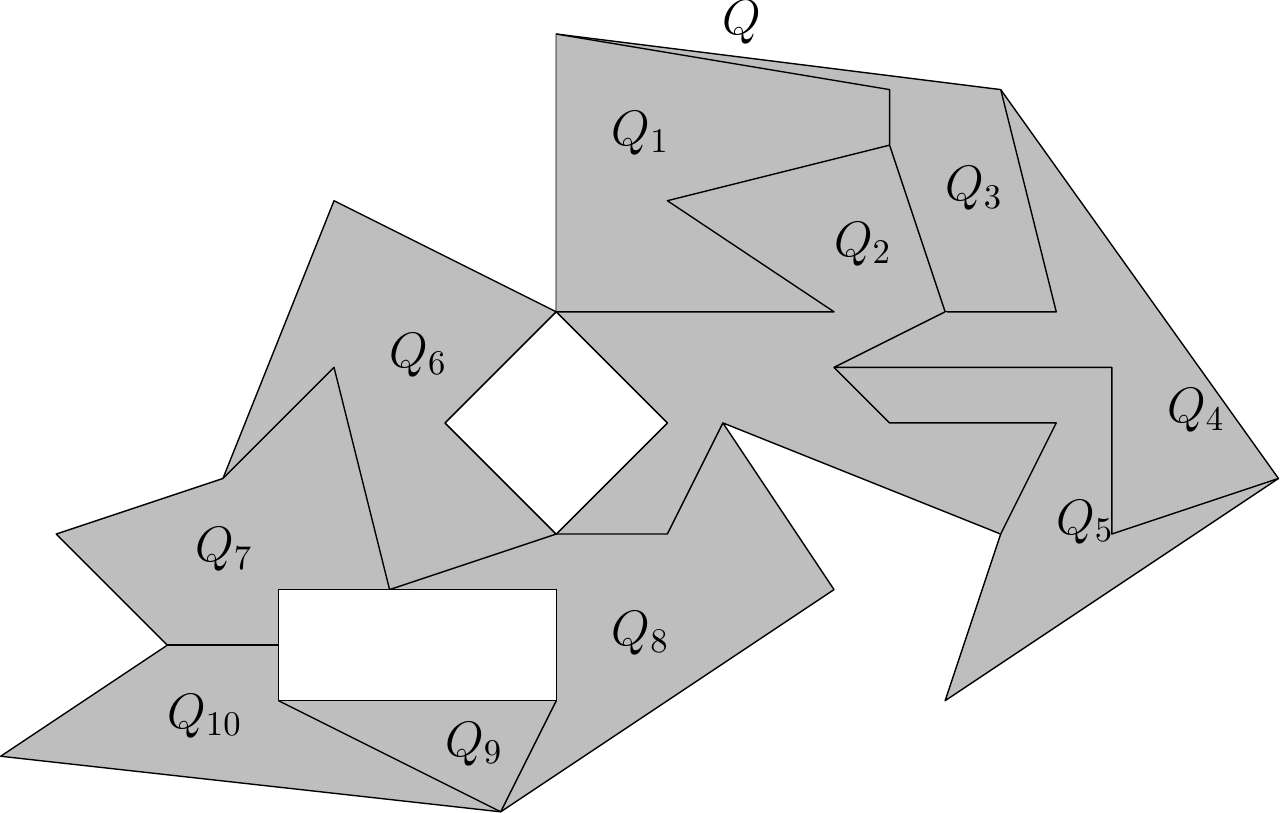}
\par\end{centering}

\caption{\label{fig:DP}Left: 
\emph{Basic }DP-points are denoted by fat dots. The dashed line indicates the triangulation of the left polygon.
Right: The dynamic
program GEO-DP tries all possibilities to subdivide the polygon $Q$
into at most $k$ smaller polygons with at most $k$ edges each. One
partition is shown in the figure.}

\end{figure}

\begin{prop} \label{prop:running-time-GEO-DP} 
When parametrized
by $k$, the running time of GEO-DP is upper bounded by $(nK)^{O(k^{2})}$.
\end{prop}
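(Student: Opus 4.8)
The plan is the standard one for a dynamic program: bound the total running time by the number of DP-cells times the work needed to fill in a single cell, once the cells it depends on are known, and check that the cells can be filled in a consistent bottom-up order. By Proposition~\ref{prop:DP-cells} the first factor is $(nK)^{O(k)}$, so it suffices to show that each cell is processed in time $(nK)^{O(k^2)}$.

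For the ordering, I would process the polygons $Q\in\Q$ in non-decreasing order of area (ties broken arbitrarily) and, when computing $sol(Q)$ for a non-base cell $|\P_Q|\ge 2$, only consider partitions of $Q$ into $k'\ge 2$ pieces from $\Q$; assuming the polygons in $\Q$ are non-degenerate, each such piece then has strictly smaller area than $Q$, so its value $w(sol(\cdot))$ has already been computed and the recurrence is acyclic. Base cells $|\P_Q|\le 1$ are handled by a single scan over $\P$ together with a $\poly(k,K)$-time containment test per input polygon.

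For the per-cell work of a non-base cell $Q$: (i) computing $\P_Q$ and $\max_{P\in\P_Q}w(P)$ costs $n\cdot\poly(k,K)$; (ii) enumerating the candidate partitions costs $(nK)^{O(k^2)}$, exactly as already observed in the description of GEO-DP, since there are at most $(nK)^{O(k)}$ polygons in $\Q$ and hence at most $\binom{(nK)^{O(k)}}{k}=(nK)^{O(k^2)}$ subsets of size at most $k$; (iii) deciding whether a given subset of at most $k$ polygons from $\Q$ tiles $Q$ exactly costs $\poly(nK,k)$, using as a common skeleton the arrangement of all segments between pairs of DP-points, of which there are $\poly(nK)$; (iv) for a valid partition $\{Q_1,\dots,Q_{k'}\}$, evaluating $\sum_{i}w(sol(Q_i))$ from the precomputed values costs $O(k)$; and (v) for the maximizing partition, forming $sol'(Q)=\bigcup_i sol(Q_i)$ --- a subset of $\P$, hence of size at most $n$ --- and comparing $w(sol'(Q))$ with $\max_{P\in\P_Q}w(P)$ cost $O(nk)$. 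Summing, the per-cell work is $(nK)^{O(k^2)}$, dominated by step (ii).

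Multiplying the $(nK)^{O(k)}$ cells by the $(nK)^{O(k^2)}$ per-cell cost, and observing that the final step --- reading off the value of the cell for the polygon containing $I$, or, if one wants the solution set itself, following the recorded partitions downwards, which has only $\poly(nK,k)$ cost per recursion level --- is negligible, yields the claimed bound $(nK)^{O(k^2)}$. The only points I would take care to verify are that the area ordering together with the restriction to $k'\ge 2$ really makes the recurrence acyclic, and that the tiling test in step (iii) runs in time polynomial in $nK$ and $k$; the rest is routine bookkeeping subsumed by the partition-enumeration term $(nK)^{O(k^2)}$.
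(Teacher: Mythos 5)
Your proof is correct and takes essentially the same route as the paper: the running time is bounded by the number of DP-cells ($(nK)^{O(k)}$ by Proposition~\ref{prop:DP-cells}) times the per-cell work, which is dominated by enumerating the $\binom{(nK)^{O(k)}}{k}=(nK)^{O(k^2)}$ candidate partitions, exactly as the paper observes in the text preceding the proposition. The extra care you take about a well-founded bottom-up order (area ordering with $k'\ge 2$) and the explicit itemization of the subordinate costs are reasonable bookkeeping that the paper leaves implicit, but they do not change the argument.
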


\subsection{Approximation Ratio}

The core of our reasoning is showing that for any set of pairwise non-touching
polygons there is a balanced cheap cut.

\begin{defn} Let $\ell\in\mathbb{N}$ and $\alpha\in\mathbb{R}$
with $0<\alpha<1$. Let $\bar{\P}$ be a set of pairwise non-touching
polygons. A polygon $\Gamma$ is a \emph{balanced $\alpha$-cheap
$\ell$-cut} if: 
\begin{itemize}
\item $\Gamma$ has at most $\ell$ edges, 
\item for the set of all polygons $\P'\subseteq\bar{\P}$ intersecting the
boundary of $\Gamma$ we have $w(\P')\le\alpha\cdot w(\bar{\P})$, 
\item for the set of all polygons $\P_{\mathrm{in}}\subseteq\bar{\P}$ contained
in $\Gamma$ it holds that $w(\P_{\mathrm{in}})\le2/3\cdot w(\bar{\P})$,
and 
\item for the set of all polygons $\P_{\mathrm{out}}\subseteq\bar{\P}$
contained in the complement of $\Gamma$, i.e., in $I\setminus\Gamma$,
it holds that $w(\P_{\mathrm{out}})\le2/3\cdot w(\bar{\P})$. 
\end{itemize}
\end{defn}

The next lemma shows that if for any set of polygons there is a good
enough balanced cheap cut, then GEO-DP has a good approximation ratio. 

\begin{lem} \label{lem:good-cut-suffices} Let $\eps>0$ and choose
$\alpha:=\frac{\eps}{\log (n/\eps)}$. Let $\ell\ge4$ be a value such that
for any set $\bar{\P}$ of at most $n$ pairwise non-touching polygons
with at most $K$ vertices each there exists a balanced $\alpha$-cheap
$\ell$-cut whose vertices are basic DP-points, or there is a polygon
$P\in\bar{\P}$ such that $w(P)\ge1/3\cdot w(\bar{\P})$. Then GEO-DP
has an approximation ratio of at most $1+O(\eps)$ when parametrized
by $k=O(\ell^{2}\cdot \log^{2}(n/\eps))$. \end{lem}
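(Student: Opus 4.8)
The plan is to argue that the recursion tree of GEO-DP, when restricted to the "right" partitioning choices dictated by the cheap cuts, has depth $O(\log(n/\eps))$ and that the total weight lost along the way is only an $O(\eps)$ fraction of $w(\opt)$, where $\opt$ denotes the optimal independent set. First I would fix the optimal solution $\opt$ and define, by induction on the subproblems, a feasible sub-solution of GEO-DP that ``shadows'' $\opt$: starting from the root polygon $Q_0 = I$ together with the set $\bar{\P} := \opt$, I apply the hypothesis to obtain either a balanced $\alpha$-cheap $\ell$-cut $\Gamma$ (with vertices among the basic DP-points, hence $\Gamma$ and its complement inside $Q_0$ are legal DP-polygons once we also cut along DP-points to keep the edge count bounded) or a single heavy polygon $P$ with $w(P)\ge \frac13 w(\bar\P)$. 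In the first case I recurse on $\P_{\mathrm{in}}$ inside $\Gamma$ and on $\P_{\mathrm{out}}$ inside $I\setminus\Gamma$; the polygons of $\bar\P$ crossing $\partial\Gamma$ are simply discarded and charged to the $\alpha$-cheap guarantee. In the second case the leaf rule of GEO-DP picks a maximum-weight polygon in $\P_Q$, which has weight at least $w(P)\ge\frac13 w(\bar\P)$, so we can stop the recursion on that branch.

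The key accounting steps are: (i) because each cut is balanced, $w(\P_{\mathrm{in}}),w(\P_{\mathrm{out}})\le \frac23 w(\bar\P)$, so after $t$ levels every surviving subproblem has optimum weight at most $(2/3)^t w(\opt)$; by Lemma~\ref{lem:bound-weights} all weights are at least $1$, so after $t=O(\log(n/\eps))$ levels every subproblem either is a leaf or has total $\opt$-weight below $1$, i.e.\ is empty --- this bounds the recursion depth by $D=O(\log(n/\eps))$. (ii) At each level the discarded (crossing) polygons have total weight at most $\alpha$ times the weight of the subproblem they came from; summing over all subproblems at a fixed level, these are disjoint subsets of $\opt$, so the per-level loss is at most $\alpha\cdot w(\opt)$, and over all $D$ levels the loss is at most $D\alpha\cdot w(\opt) = O(\log(n/\eps))\cdot \frac{\eps}{\log(n/\eps)}\cdot w(\opt) = O(\eps)\,w(\opt)$. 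Hence the shadow solution constructed above has weight at least $(1-O(\eps))\,w(\opt)$. (iii) Since GEO-DP at each cell maximizes over \emph{all} partitions into $\le k$ DP-polygons (and over the heavy-leaf choice), its value is at least the value of this particular shadow solution, which gives the claimed approximation ratio $1+O(\eps)$ after the standard rescaling.

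The remaining point is the choice of $k$: I must check that each $\Gamma$ (and $I\setminus\Gamma$) appearing in the recursion, as well as every piece obtained by intersecting a level-$t$ region with a new cut, is a DP-polygon with at most $k$ edges, and that each splitting step of GEO-DP partitions a cell into at most $k$ sub-cells. A region at depth $t$ is an intersection of $t$ cuts, each with $\le \ell$ edges; naively its complexity could grow like $t\ell$ or worse due to the extra crossings, and $t$ can be as large as $\Theta(\log(n/\eps))$. To keep the parent-to-children split a single DP transition one should instead observe that a cell at depth $t$ and its two children at depth $t+1$ only involve the $O(\ell)$ edges of the current cut plus the $O(\ell)$ edges already describing the cell; the arithmetic that the pieces have $O(\ell^2\log^2(n/\eps))$ edges (so $k=O(\ell^2\log^2(n/\eps))$ suffices, with the extra $\log^2$ factor absorbing the at most $O(\ell\log(n/\eps))$ earlier edges and the $O(\ell^2)$ new intersection points they create with the current cut, all of which are DP-points since additional DP-points include all pairwise segment intersections of basic DP-points) is the one genuinely fiddly computation. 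I expect this bookkeeping on the edge count --- making sure the cut vertices being \emph{basic} DP-points guarantees all induced crossing points are \emph{additional} DP-points, and that the blow-up stays within the stated $k=O(\ell^2\log^2(n/\eps))$ --- to be the main technical obstacle; the weight accounting itself is the straightforward geometric-series argument sketched above.
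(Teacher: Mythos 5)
Your plan follows the paper's own proof essentially verbatim: your recursively constructed ``shadow solution'' is the family $\Q_j$ built in Lemma~\ref{lem:families-Pj}, and your additive per-level loss $D\alpha = O(\eps)$ is just the linearization of the paper's multiplicative $(1-\alpha)^{j^*}$ bound. The ``fiddly computation'' you flag at the end is exactly the paper's Lemma~\ref{lem:polygons-complexity} (an intersection of $m$ polygons with $\le\bar\ell$ edges each has $\le (m\bar\ell)^2$ connected components, each with $\le (m\bar\ell)^2$ edges), which with $m=O(\log(n/\eps))$ and $\bar\ell=\ell+4$ simultaneously bounds the edge count of each cell and the fan-out of each DP split by $k=O(\ell^2\log^2(n/\eps))$ --- note your intermediate claim that a depth-$t$ cell is ``described by $O(\ell)$ edges'' is not right (it can have $\Theta((t\ell)^2)$ edges), though your final bound on $k$ is.
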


\begin{proof}[Proof sketch.]Let $\P_{\opt}\subseteq\P$ denote the
optimal solution
for a problem instance $\P$. We start with the polygon $Q_{0}:=\{[0,N]\times[0,N]\}$.
If there is a polygon $P\in\P_{\opt}$ with $w(P)\ge1/3\cdot w(\P_{\opt})$
then GEO-DP guesses the partition $Q_{1}:=Q_{0}\setminus P$ and $Q_{2}:=P$
and recurses on $Q_{1}$. Note that $w(Q_{1})\le2/3\cdot w(\P_{\opt})$.
Otherwise, GEO-DP guesses a balanced $\alpha$-cheap $\ell$-cut $\Gamma$ for $\P_{\opt}$
and recurses on $Q'_{1}:=\Gamma$ and $Q'_{2}:=Q_{0}\setminus\Gamma$.
Note that $w(Q'_{1})\le2/3\cdot w(\P_{\opt})$ and $w(Q'_{2})\le2/3\cdot w(\P_{\opt})$,
and the total weight of intersected polygons from $\P_{\opt}$ is upper
bounded by $\alpha \cdot w(\P_{\opt})=\frac{\eps}{\log(n/\eps)} \cdot w(\P_{\opt})$. We continue recursively
for $O(\log(n/\eps))$ levels until we obtain subproblems containing
at most one polygon from $\P_{\opt}$ each (recall that $w(\P_{\opt})\le n^{2}/\eps$,
see Lemma~\ref{lem:bound-weights}). As in each level we intersect,
i.e., lose, polygons of total weight at most $\alpha \cdot w(\P_{\opt})$,
the total approximation ratio is $1/(1-\alpha)^{O(\log(n/\eps))}\le 1+O(\eps)$.

Each polygon $Q_i$ inducing a subproblem in each level can be expressed
as an intersection of at most $O(\log(n/\eps))$ polygons with at
most $\ell$ edges each, all corners of those being basic DP-points.
Hence, the resulting polygon has at most $k$ edges and all its corners
are basic or additional DP-points. Also, when applying a cut $\Gamma$
on such a subproblem $Q_{i}$, the number of resulting connected components of $\Gamma$ and $Q_i \setminus \Gamma$ is upper bounded
by $k$. Hence, when executed with parameter $k$, eventually GEO-DP
will guess the sequence of cuts described above and thus compute a
$(1+O(\eps))$-approximative solution.\end{proof}


\section{Finding a Balanced Cheap Cut}

In this section we show that for any set of pairwise non-touching polygons with at most $K=\textrm{poly}(\log n)$ vertices each we can always find a balanced cheap cut with good parameters. Together with Lemma \ref{lem:good-cut-suffices} this will prove that the algorithm GEO-DP parametrized by a suitable parameter $k$ is a QPTAS, assuming that the input consists of polygons with at most $K$ vertices each.

Our reasoning has three steps. First, we show that we can reduce the problem of finding the desired cut to the special case when all polygons are triangles. In the second step we construct a partition of the plane that uses only a bounded number of edges and which has some 
useful properties. Finally, we apply a a planar graph separator theorem from \cite{Arora1998} to the partition, which then yields the cut.

\subsection{Reduction to Triangles}

We show that up to a factor of $K$ in the cost of the cut and its
complexity, it suffices to find a good cut for the case that all polygons
are triangles.

\begin{lem} \label{lem:reduction-to-triangles}
Assume that for any $\delta>0$
and for any set of pairwise non-touching triangles $\T$ 
such that $\max_{T\in\T}w(T)<w(\T)/3$
there is a balanced $\delta$-cheap $f(\delta)$-cut
with vertices at basic DP-points
for some function $f$. Then for any $\delta>0$ and for any set of pairwise non-touching
polygons $\P$ with at most $K$ edges each such that $\max_{P\in\P}w(P)<w(\P)/3$
there is a balanced $K \cdot \delta$-cheap
$K\cdot f(\delta)$-cut
with vertices at basic DP-points.
\end{lem}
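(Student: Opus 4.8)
The plan is to take the triangulation $\T(P)$ fixed for each polygon $P\in\P$ (the one used to define the DP-points), and to work with the triangle set $\T:=\bigcup_{P\in\P}\T(P)$. Assigning each triangle of $\T(P)$ a weight proportional to $w(P)$ divided by the number of triangles in the triangulation — or more simply, distributing $w(P)$ arbitrarily among its at most $K-2$ triangles so that the triangle weights sum to $w(P)$ — we obtain a set of pairwise non-touching triangles (they are non-touching because the original polygons were non-touching and the triangulations are internal) with $w(\T)=w(\P)$. First I would check the hypothesis $\max_{T\in\T}w(T)<w(\T)/3$: since $\max_P w(P)<w(\P)/3$ and each triangle carries at most the full weight of its polygon, we actually get $\max_T w(T)\le\max_P w(P)<w(\P)/3=w(\T)/3$, so the triangle hypothesis of the assumed statement holds with the same constant.

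Next I would apply the assumed result with parameter $\delta$ to get a balanced $\delta$-cheap $f(\delta)$-cut $\Gamma$ for $\T$, with vertices at basic DP-points. The claim is that this same $\Gamma$ (or a mild modification of it) is a balanced $K\cdot\delta$-cheap $K\cdot f(\delta)$-cut for $\P$. The edge bound is immediate: $\Gamma$ already has at most $f(\delta)\le K\cdot f(\delta)$ edges and its vertices are basic DP-points. For the cheapness condition, the key observation is that a polygon $P\in\P$ is intersected by $\partial\Gamma$ only if at least one of its triangles $T\in\T(P)$ is intersected by $\partial\Gamma$; hence the total $\P$-weight of polygons meeting $\partial\Gamma$ is at most the total $\T$-weight of triangles meeting $\partial\Gamma$, which is at most $\delta\cdot w(\T)=\delta\cdot w(\P)\le K\cdot\delta\cdot w(\P)$. (The factor $K$ is really slack here; it becomes relevant if one instead tracks the number of times each polygon is crossed, which is what the informal statement about ``a factor of $K$ in the cost and complexity'' is hinting at — each polygon can be crossed up to $K$ times more often than a single triangle.)

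For the two balance conditions I would argue similarly. A polygon $P\in\P$ is contained in $\Gamma$ only if all of its triangles lie in $\Gamma$, so in particular it is contained only if at least one triangle $T\in\T(P)$ lies in $\Gamma$ and $P$ is not intersected by $\partial\Gamma$; thus the $\P$-weight of polygons fully inside $\Gamma$ is at most the $\T$-weight of triangles inside $\Gamma$, which is $\le\frac23 w(\T)=\frac23 w(\P)$. The symmetric statement holds for $I\setminus\Gamma$. Putting these together, $\Gamma$ is a balanced $(K\delta)$-cheap $(K f(\delta))$-cut for $\P$ as required — and in fact one even gets a $\delta$-cheap $f(\delta)$-cut, so the lemma as stated is comfortably true with room to spare.

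The main obstacle I anticipate is not in the weight bookkeeping, which is routine, but in the bookkeeping of the \emph{combinatorial complexity} of the cut once one later iterates this reduction together with the recursion in Lemma~\ref{lem:good-cut-suffices}: a polygon that is crossed many times forces the separator to have correspondingly many edges, and it is exactly to absorb this blow-up that the statement allows $K\cdot f(\delta)$ edges rather than $f(\delta)$. A secondary point to handle carefully is that $\Gamma$'s vertices must be \emph{basic} DP-points for $\P$; since $\E_T$ was defined using precisely the triangulations $\T(P)$, the basic DP-points associated with the triangle instance $\T$ coincide with the basic DP-points of $\P$, so no adjustment is needed — but this identification should be stated explicitly. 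I would also remark that, as the paper emphasizes, this reduction works because it is built on the \emph{fixed} triangulation used to define the DP, and would not go through for an arbitrary triangle-solver; the DP's guessing of the cut is what makes the triangle-to-polygon passage legitimate.
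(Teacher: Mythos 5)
There is a genuine gap, and it is exactly at the step you flagged as ``routine.''

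Your argument hinges on the claim that a polygon $P\in\P$ intersects $\partial\Gamma$ only if some open triangle $T\in\T(P)$ intersects $\partial\Gamma$. This is false in general: an edge $e$ of $\Gamma$ (a segment between two basic DP-points) can be collinear with an internal diagonal of the triangulation $\T(P)$ and run along it, so that $e$ crosses $P$ while meeting none of the open triangles of $\T(P)$. (General position rules out a third \emph{input vertex} on the line through the diagonal's two endpoints, but basic DP-points also include intersections of vertical lines with triangulation edges, so a full cut edge along that diagonal is not excluded.) Such a polygon $P$ is then disconnected by $\Gamma$ --- it is neither inside nor outside --- yet contributes nothing to the triangle-level cheapness bound, so you cannot charge its weight anywhere. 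The paper's proof handles precisely this by detecting, for each such edge $e$, the unique offending polygon $P(e)$ (unique by general position) and replacing $e$ by at most $K$ segments along $\partial P(e)$ that circumvent $P(e)$, choosing the side so that the balance constraints $\le\frac{2}{3}w(\P)$ are preserved (here $w(P(e))<w(\P)/3$ is used). This circumvention is the actual source of the $K\cdot f(\delta)$ edge bound; it is not a slack factor for a later recursion, and without it your construction simply does not produce a valid cut for $\P$.

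A secondary but telling error: you assert that the $\P$-weight of polygons meeting $\partial\Gamma$ is at most the $\T$-weight of triangles meeting $\partial\Gamma$, and conclude that one even gets a $\delta$-cheap $f(\delta)$-cut. That inequality is off by a factor of up to $K-2$: with weight split evenly, a polygon $P$ whose single touched triangle carries only $w(P)/(K-2)$ still contributes its full weight $w(P)$ on the polygon side. The correct chain is $\P$-weight of touched polygons $\le(K-2)\cdot(\text{$\T$-weight of touched triangles})\le(K-2)\delta\,w(\T)\le K\delta\,w(\P)$, which is exactly why the lemma's statement has the $K$ in the cheapness parameter. So the lemma is not ``comfortably true with room to spare''; both $K$ factors are earned, one by the weight accounting and one by the circumvention that your proof omits.
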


\begin{proof} Consider a set of pairwise non-touching polygons
$\P$ with at most $K$ edges each and let $\delta>0$. We triangulate
each polygon $P\in\P$ into at most $K-2$ triangles $\T(P)$,
using the same triangulation as used for choosing the basic DP-points,
and distribute the weight of $P$ equally among the triangles in $\T(P)$.
Let $\T:=\cup_{P\in\P}\T(P)$ denote the resulting set of pairwise
non-touching triangles. We have that $\max_{T\in\T}w(T)\le\max_{P\in\P}w(P)<w(\P)/3=w(\T)/3$.
By assumption, there is a balanced $\delta$-cheap $f(\delta)$-cut
$\Gamma$ for $\T$
with vertices at basic DP-points. Notice that the sets of basic DP-points for $\P$ and $\T$ are the same. We will transform $\Gamma$ to a $K \cdot \delta$-cheap
$K\cdot f(\delta)$-cut $\Gamma'$ for $\P$.

An edge $e$ of $\Gamma$ can cross a polygon $P\in\P$ in two ways.
The first case is that $e$ touches at least one triangle $T\in\T(P)$.
Since $\Gamma$ is $\delta$-cheap, the total weight of triangles 
touched by edges of $\Gamma$ is
upper bounded by $\delta\cdot w(\T)$.
Therefore, the total weight of polygons $P\in \P$ such that at least one 
triangle of $\T(P)$ is touched by an edge of $\Gamma$
is upper bounded by $\delta(K-2)\cdot w(\T)\le\delta K\cdot w(\P)$.
The other case is that $e$ crosses~$P$, but it does not cross any
triangle from $\T(P)$ (see Figure~\ref{fig:circumvention-of-polygon} in the appendix). Since we assumed that no three vertices of the
input polygons lie on a straight line (and edges of $\Gamma$ are
straight lines), each edge $e$ of $\Gamma$ can cross at most one
polygon $P(e)\in\P$ in this way. We fix this by replacing $e$ by
at most $K$ edges such that $P(e)$ is circumvented and still for
each side of the cut we have the property that the weight of contained
polygons is at most $2/3\cdot w(\P)$. 
Note that here we use that $w(P(e))<w(\P)/3$. 
All newly introduced vertices of the cut $\Gamma'$ are vertices of input polygons $\P$, and so they are basic DP-points.
We perform this operation
for each edge of $\Gamma$. Denote by $\Gamma'$ the resulting cut
for $\P$. By construction, an edge $e'$ of $\Gamma'$ can cross
a polygon $P$ only if there is an edge $e$ of $\Gamma$ that crosses
a triangle $T\in\T(P)$. Hence, $\Gamma'$ is a balanced $K \cdot \delta$-cheap
$K\cdot f(\delta)$-cut,
and all vertices of $\Gamma'$ are basic DP-points.
\end{proof}

In fact, with similar arguments as in the above lemma one can show
that if GEO-DP parametrized by $k$ yields a $(1-\eps)^{-1}$-approximation
algorithm for triangles, then GEO-DP parametrized by some parameter
in $\poly(k,K)$ is a $(1-K\cdot\eps)^{-1}$-approximation algorithm
for polygons with at most $K$ edges. Similarly as above, the idea
is to triangulate each polygon in the input, consider an execution
of GEO-DP on this triangulation, and circumvent polygons $P$ that
are touched without any of their triangles $\T(P)$ being touched.
Hence, an alternative way of showing that GEO-DP is a QPTAS for MWISP if $K\le (\log n)^{O(1)}$
is to prove that for triangles it is a $(1+\epsilon)$-approximation with running time 
$n^{\poly(\log n,1/\epsilon)}$, and then apply the above reasoning. 
Note that the actual dependence of $\epsilon$ on the running time
is crucial for this line of reasoning.

Following the (other) approach suggested by Lemma~\ref{lem:reduction-to-triangles}, in the remainder of this section we prove that for
any set of triangles $\T$ and for any $\delta > 0$ there is a balanced $O(\delta)$-cheap
$\left(\frac{1}{\delta}\right)^{O(1)}$-cut.


\subsection{Partitioning the Plane}

Suppose we are given a set $\T$ of
pairwise non-touching
triangles. Denote by $\E$ the
set of all boundary edges of the triangles in $\T$. For constructing
the cut, we associate a weight $\bar{w}(p_{i,j})$ with each vertex
$p_{i,j}$ of a triangle from $\T$. For each triangle $T_{i}$, we
define $\wb(p_{i,j}):=w(T_{i})/3$ for each point $p_{i,j}$, i.e.,
we equally distribute the weight of $T_{i}$ among its vertices. For
any area $C\subseteq I$ we define $\wb(C):=\sum_{p_{i,j}\in C}\wb(p_{i,j})$.

We construct the partition in three steps: 
\begin{enumerate}
\item We subdivide the plane into $O(1/\delta^{2})$ vertical \emph{stripes}
such that each stripe contains points with total weight smaller than
$w(\T)\cdot\delta^{2}$. The stripes are disjoint open sets. 
\item We subdivide each vertical stripe, along the lines from $\E$ crossing
the stripe from left to right, into $O(1/\delta^{4})$ \emph{cells}.
Each cell has either no lines in $\E$ crossing it from left to right,
or all triangles touching the cell have total weight at most $w(\T)\cdot\delta^{4}$.
The cells, similarly as stripes, are open sets. 
\item We transform the collection of cells into a subdivision of the plane
such that each face of the subdivision surrounds exactly one triangle or touches
triangles of total weight at most $O(\delta^2 \cdot w(\T))$, and each triangle in $\T$
is intersected only $O(1)$ times. 
\end{enumerate}
Later, we will apply the planar 
graph separator theorem from~\cite{Arora1998}
to the subdivision to obtain a cut with the desired properties.


\subsubsection{Partition into Cells }

We construct the vertical stripes as maximal stripes of the input
square which contain points of total weight smaller than $\delta^{2}\cdot w(\T)$.

We proceed iteratively as follows. First, we define $x_{0}:=0$ and
we set $x_{1}$ such that $\wb((x_{0},x_{1})\times[0,N])<\delta^{2}\cdot w(\T)$
and $\wb((x_{0},x_{1}]\times[0,N])\ge\delta^{2}\cdot w(\T)$. Such
$x_{1}$ is uniquely defined. We create a stripe $S_{1}:=(x_{0},x_{1})\times[0,N]$.
Iteratively, we define $x_{i}$ such that $\wb((x_{i-1},x_{i})\times[0,N])<\delta^{2}\cdot w(\T)$
and $\wb((x_{i-1},x_{i}]\times[0,N])\ge\delta^{2}\cdot w(\T)$ and
we set $S_{i}:=(x_{i-1},x_{i})\times[0,N]$. We stop when we have
defined a value $x_{i^{*}-1}$ such that $\wb((x_{i^{*}-1},N)\times[0,N])<\delta^{2}\cdot w(\T)$.
In that case we set $x_{i^{*}}=N$, we define $S_{i^{*}}:=(x_{i^{*}-1},x_{i^{*}})\times[0,N]$
and we terminate. We obtain the following property.

\begin{prop} 
There is a subdivision of the input square $I$ into
a set of vertical stripes $S_{1},...,S_{i^{*}}$ such that $i^{*}\le1/\delta^{2}$
and for each stripe $S_{i}$ we have $\wb(S_{i})<\delta^{2}\cdot w(\T)$.
\end{prop}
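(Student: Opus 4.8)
The plan is to verify the two claimed properties directly from the greedy construction described just above the statement: that the number of stripes satisfies $i^* \le 1/\delta^2$, and that each stripe $S_i$ has $\wb(S_i) < \delta^2 \cdot w(\T)$. The second property is essentially immediate: by the stopping rule used to define $x_i$, we have $\wb((x_{i-1},x_i)\times[0,N]) < \delta^2 \cdot w(\T)$ for every $i < i^*$, and the final stripe $S_{i^*}$ is defined precisely when the remaining weight $\wb((x_{i^*-1},N)\times[0,N])$ drops below $\delta^2\cdot w(\T)$, so it too satisfies the bound. The only subtlety is that the stripes $S_i = (x_{i-1},x_i)\times[0,N]$ are \emph{open} in the $x$-direction, so the boundary lines $x = x_i$ (and the points lying exactly on them) are not counted in any stripe; this is consistent with the statement, which only claims the weight bound for the open stripes, and it is exactly why one uses the strict-versus-nonstrict inequality pair in the definition of $x_i$.

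For the bound $i^* \le 1/\delta^2$, the key observation is that the half-open intervals $(x_{i-1},x_i]$ for $i = 1,\dots,i^*-1$ each carry weight at least $\delta^2\cdot w(\T)$, by the ``$\ge$'' half of the defining condition for $x_i$. These half-open strips $(x_{i-1},x_i]\times[0,N]$ are pairwise disjoint, and together with the points in the open interval $(x_{i^*-1},N)$ and the point set on $x=0$ they cover a subset of $I$, so the sum of their weights is at most $\wb(I) = w(\T)$. Actually one has to be slightly careful about how the total vertex weight relates to $w(\T)$: since $\wb(p_{i,j}) = w(T_i)/3$ and each triangle contributes three vertices, $\sum_{\text{all vertices}} \wb = w(\T)$ exactly (assuming all triangle vertices are distinct, which holds in general position). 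Hence $(i^*-1)\cdot \delta^2 \cdot w(\T) \le \sum_{i=1}^{i^*-1}\wb((x_{i-1},x_i]\times[0,N]) \le w(\T)$, which gives $i^* - 1 \le 1/\delta^2$, and therefore $i^* \le 1/\delta^2 + 1$. To get the cleaner bound $i^* \le 1/\delta^2$ claimed in the statement one absorbs the additive constant, e.g. by noting that the last strip together with one of the earlier ones can be merged if the total stays below the threshold, or simply by treating $\delta$ as small and $1/\delta^2$ as a bound up to rounding — I would phrase this as: the number of \emph{complete} strips $(x_{i-1},x_i]$ with weight $\ge \delta^2 w(\T)$ is at most $1/\delta^2$, and the final incomplete strip adds at most one more, so $i^*\le 1/\delta^2$ after renaming, or one accepts $O(1/\delta^2)$ throughout (which is all that is used later).

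The remaining thing to check is that the construction is well-defined, i.e.\ that each $x_i$ exists and is unique and that the process terminates. Uniqueness of $x_i$ follows because $x \mapsto \wb((x_{i-1},x]\times[0,N])$ is a nondecreasing step function of $x$, constant except for upward jumps at the (finitely many) $x$-coordinates of triangle vertices; the condition $\wb((x_{i-1},x_i)\times[0,N]) < \delta^2 w(\T) \le \wb((x_{i-1},x_i]\times[0,N])$ pins $x_i$ to the unique such jump point where the accumulated weight first reaches the threshold. Termination is clear because each step consumes at least one vertex $x$-coordinate strictly to the right of $x_{i-1}$ (in fact at least weight $\delta^2 w(\T) > 0$), and there are finitely many vertices, so after at most $n K$ steps we reach the terminating case where the residual weight is below threshold. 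The main (and really only) obstacle here is bookkeeping around the open/half-open boundary and the off-by-one in $i^*$; there is no substantive difficulty, since the statement is just a clean packaging of the greedy sweep.
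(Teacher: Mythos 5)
Your proof is correct and is essentially the argument the paper leaves implicit, since the proposition is stated as an immediate consequence of the greedy sweep that precedes it. You correctly extract the two facts that drive it: the weight bound on each open stripe comes directly from the strict half of the defining inequality (for $i<i^*$) and from the termination test (for $i=i^*$), and the count bound comes from packing the disjoint half-open slabs $(x_{i-1},x_i]$, each of weight at least $\delta^2 w(\T)$, into a region of total weight $w(\T)$. Your observation about the off-by-one is also accurate: the literal argument gives $i^*-1\le 1/\delta^2$, i.e.\ $i^*\le 1/\delta^2+1$, and the slack term (the boundary line $x=0$) is not guaranteed to absorb the extra unit. This is a harmless imprecision in the stated constant — everything downstream (Propositions~\ref{prop:number-lines-L_0} and \ref{prop:number-lines-L_ext}) only uses $i^*=O(1/\delta^2)$ — so your choice to flag it and proceed with the asymptotic bound is the right call. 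The well-definedness and termination remarks are also correct and worth including, since the paper's phrase ``such $x_1$ is uniquely defined'' is exactly the step-function observation you spell out.
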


As a next step, we subdivide each stripe $S_{i}$, along the lines
of $\E$ crossing $S_{i}$ from left to right, into a collection
of \emph{cells}. We define the subdivision in such a way that
each cell has either no lines in $\E$ crossing it from left
to right, or all triangles touching the cell have total weight at
most $w(\T)\cdot\delta^{4}$.

Consider a stripe $S_{i}=(x_{i-1},x_{i})\times[0,N]$. Each cell for
the stripe $S_{i}$ will be a polygon with four edges, whose four consecutive
corners are of the form $(x_{i-1},y_{1}^{L}),(x_{i-1},y_{2}^{L}),(x_{i},y_{2}^{R}),(x_{i},y_{1}^{R})$
for some values $y_{1}^{L},y_{2}^{L},y_{1}^{R},y_{2}^{R}$ such that
$L[(x_{i-1},y_{1}^{L})(x_{i},y_{1}^{R})]$ and $L[(x_{i-1},y_{2}^{L})(x_{i},y_{2}^{R})]$
are subsegments of lines in $\E$, or of the boundary of $I$ (see Figure~\ref{fig:stripes-cells}).
We define the set $\E_{i}$ to be the set of all lines in $\E$ which cross $S_{i}$
from left to right. The subdivision of $S_i$ into cells is obtained by selecting a set
$\bar{\E}_{i}\subseteq\E_{i}$ whose elements then yield the top and bottom boundaries of the
cells. This selection is done in a straight-forward manner.

\begin{figure}
\begin{centering}
\includegraphics[scale=0.60]{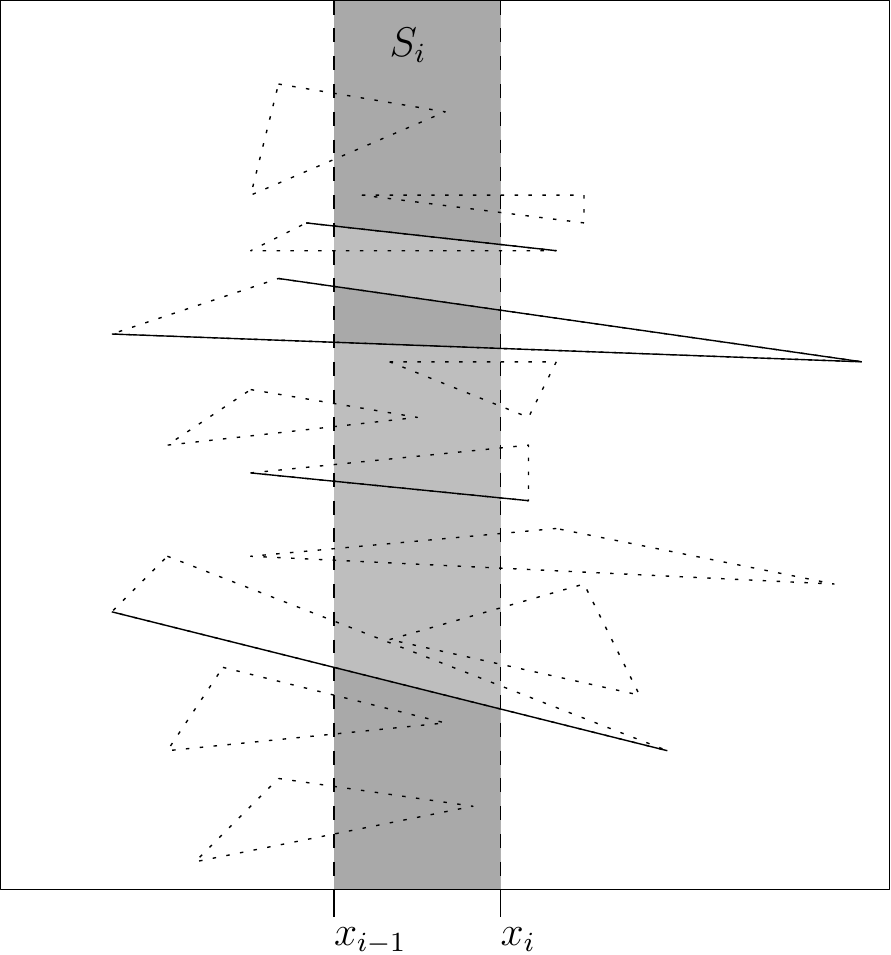}~~~~~~~~~~~~~\includegraphics[scale=1.0]{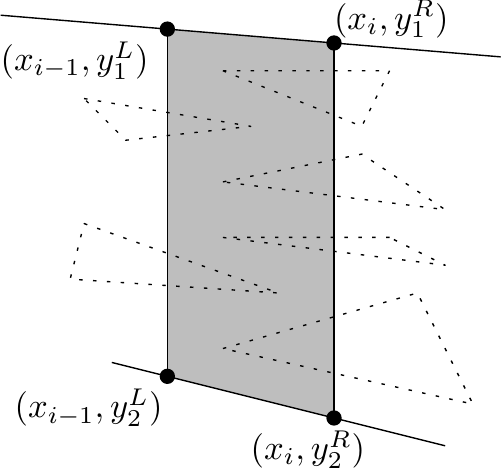} 
\par\end{centering}

\caption{\label{fig:stripes-cells}Left: Subdivision of a stripe $S_{i}$ into cells. The dotted lines denote
the triangles from $\T$ touching $S_{i}$, and the solid lines denote the edges
in $\bar{\E}_{i}$. The dark gray areas are dense cells, the light gray areas are light cells of $S_i$. Right: A dense cell. The solid vertical lines indicate
the lines in $\Le$ added for this cell.
}
\end{figure}

\begin{lem} \label{lem:partition-stripes} 
For any stripe $S_i$ there is a set $\bar{\E}_{i}\subseteq\E_{i}$
with $|\bar{\E}_{i}|\le O(\frac{1}{\delta^{4}})$ such that for any
connected component $C$ of $S_{i}\setminus \bar{\E}_{i}$
we have that $C\cap \E_{i}=\emptyset$ or the total weight of triangles touching $C$ is at most $\delta^{4}\cdot w(\T)$. 
\end{lem}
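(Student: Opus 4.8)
The plan is to exploit the fact that, inside a single stripe $S_i$, the segments of $\E_i$ behave like a family of pairwise disjoint vertical chords. First I would observe that any two lines of $\E_i$ are disjoint within $S_i$: each of them crosses $S_i$ from its left wall to its right wall, and two such segments can meet only at a common endpoint, which is a shared vertex of one triangle whose two stripe-crossing edges emanate from that vertex — and that vertex necessarily lies to the left or to the right of $S_i$, hence outside it. Consequently, ordering $\E_i=\{e_1,\dots,e_m\}$ from bottom to top is well defined, and deleting these chords from $S_i$ splits it into $m+1$ open \emph{slabs} $B_0,\dots,B_m$, where $B_j$ is the region between $e_j$ and $e_{j+1}$ (with $B_0$ below $e_1$ and $B_m$ above $e_m$); equivalently, $B_0,\dots,B_m$ are exactly the connected components of $S_i\setminus\bigcup_{e\in\E_i}e$.

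The key structural claim I would establish is that every triangle $T\in\T$ that touches $S_i$ is confined to a single slab, i.e.\ $T\cap S_i\subseteq B_{j(T)}$ for one index $j(T)$. This is immediate once set up correctly: $T\cap S_i$ is an intersection of convex sets and hence connected, and it is disjoint from every $e\in\E_i$ — if $e$ is an edge of $T$ then $e\subseteq\partial T$ misses the open set $T$, and if $e$ is an edge of another triangle $T'$ then $T\cap e\neq\emptyset$ would force $T\cap T'\neq\emptyset$, contradicting that $\T$ is pairwise non-touching. A connected subset of $S_i$ avoiding all the chords must lie in one of the slabs. As a consequence, if for each slab $B_j$ we write $g_j$ for the total weight of the triangles touching $B_j$, then every triangle touching $S_i$ is counted in exactly one $g_j$, so $\sum_{j=0}^m g_j\le w(\T)$; moreover, for any connected component $C$ of $S_i\setminus\bar{\E}_i$ (for any $\bar{\E}_i\subseteq\E_i$), the total weight of triangles touching $C$ equals $\sum_{j:\,B_j\subseteq C}g_j$, since each such triangle is confined to one of the slabs that make up $C$.

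It then remains to choose $\bar{\E}_i$ so that every component $C$ that still meets $\E_i$ — equivalently, is a union of at least two slabs — has $\sum_{B_j\subseteq C}g_j\le\delta^4 w(\T)$, single-slab components requiring nothing since there $C\cap\E_i=\emptyset$. Put $\tau:=\delta^4 w(\T)/2$. I would first isolate every \emph{heavy} slab ($g_j>\tau$) by adding its at most two bounding chords to $\bar{\E}_i$; since $\sum_j g_j\le w(\T)$ there are fewer than $2/\delta^4$ heavy slabs, costing $O(1/\delta^4)$ chords. The heavy slabs split the slab sequence into runs of consecutive \emph{light} slabs ($g_j\le\tau$), and inside each run I would apply the obvious greedy sweep: process the slabs from bottom to top, maintain a running sum of the $g_j$'s, and whenever it exceeds $\tau$ insert into $\bar{\E}_i$ the chord directly above the current slab and reset the sum to $0$. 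Because every light slab has $g_j\le\tau$ and the sum is reset the moment it exceeds $\tau$, the combined $g$-weight of the slabs between any two consecutive inserted chords (or between a chord and a stripe boundary) is at most $\tau+\tau=\delta^4 w(\T)$, and the number of chords inserted in all runs is at most $(\sum_j g_j)/\tau$ up to one extra per run, hence $O(1/\delta^4)$. Altogether $|\bar{\E}_i|=O(1/\delta^4)$, and by the previous paragraph any component $C$ of $S_i\setminus\bar{\E}_i$ that meets $\E_i$ has triangle-touching weight $\sum_{B_j\subseteq C}g_j\le\delta^4 w(\T)$, which is exactly what the lemma asks.

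The only step carrying real content is the confinement claim, and the point that deserves a careful sentence is the disjointness of the chords of $\E_i$ inside $S_i$: this is precisely what makes "slabs = connected components of $S_i$ minus the chords" correct, and hence what forces the connected (convex) section $T\cap S_i$ of any triangle into a single slab. Everything afterwards — the two summations over $g_j$ and the isolate-heavy-then-greedy selection — is a routine one-dimensional covering computation.
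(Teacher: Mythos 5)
Your proposal is correct and takes essentially the same approach as the paper: a greedy sweep along the ordered chords of $\E_i$ inside the stripe, made sound by the observation (explicit in your write-up, only implicit in the paper's) that each triangle's intersection with $S_i$ is connected and avoids every line of $\E_i$, hence is confined to a single slab. The only cosmetic difference is that the paper uses a single-pass greedy (extend the current block while its weight stays $\le\delta^4 w(\T)$, falling back to a one-slab block when even a single slab is too heavy) whereas you first cut out heavy slabs and then batch light runs with threshold $\delta^4 w(\T)/2$; both yield the $O(1/\delta^4)$ bound by the same counting argument against $w(\T)$.
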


We define $\L_{0}:=\cup_{i}\bar{\E}_{i}$. These lines subdivide the
stripes into cells. An area $C\subseteq I$ is a \emph{cell} if there
is a stripe $S_{i}$ such that $C$ is a connected component of 
$S_i \setminus\bar{\E}_{i}$.
We call a cell $C$ \emph{dense} if $C\cap\E_{i}=\emptyset$
and \emph{light} otherwise, see Figure \ref{fig:stripes-cells}.
As the endpoints of all lines in $\L_{0}$ are vertices of the input polygons, they are basic DP-points.

\begin{prop} \label{prop:number-lines-L_0}
The number of lines in $\L_{0}$ is upper bounded by $O(1/\delta^{6})$. 
The endpoints of lines in $\L_{0}$ are basic DP-points.
\end{prop}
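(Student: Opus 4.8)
\textbf{Proof plan for Proposition~\ref{prop:number-lines-L_0}.}
The plan is to bound the total number of segments in $\L_0=\cup_i\bar\E_i$ simply by summing the per-stripe bounds from Lemma~\ref{lem:partition-stripes} over all stripes, and then to observe that each segment of $\L_0$ is a subsegment of a line in $\E$, whose endpoints on the stripe boundary are of the required form. First I would recall that there are at most $i^*\le 1/\delta^2$ vertical stripes, by the preceding proposition on the partition of $I$ into stripes. For each stripe $S_i$, Lemma~\ref{lem:partition-stripes} gives a set $\bar\E_i\subseteq\E_i$ with $|\bar\E_i|\le O(1/\delta^4)$ whose elements cut $S_i$ into cells with the stated density property. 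Summing, $|\L_0|=\sum_{i=1}^{i^*}|\bar\E_i|\le (1/\delta^2)\cdot O(1/\delta^4)=O(1/\delta^6)$, which is the first claim. (Here I am implicitly using that the $\bar\E_i$ for distinct stripes contribute distinct segments to $\L_0$, since a line in $\E$ restricted to $S_i$ is a segment living strictly inside $S_i$, and the stripes are disjoint; even if we did not insist on this, the union only gets smaller, so the bound still holds.)

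For the second claim, each segment $e\in\L_0$ lies in some $\bar\E_i\subseteq\E_i\subseteq\E$, so $e$ is the portion of a boundary edge of some triangle $T\in\T$ (equivalently, an edge from the triangulation of some input polygon, which is why it belongs to $\E_T$ in the DP setup) that crosses the stripe $S_i=(x_{i-1},x_i)\times[0,N]$ from left to right. Hence the two endpoints of $e$ are the intersections of a line from $\E$ with the two vertical lines $x=x_{i-1}$ and $x=x_i$. Now I would invoke how the basic DP-points were defined: the stripe boundaries $x=x_{i-1}$ and $x=x_i$ are vertical lines through coordinates of the form $x=x_{i,j}$ for input vertices $p_{i,j}\in\P$ — indeed the stripes were constructed so that each $x_i$ is a coordinate of some input vertex, since the weights $\wb$ are supported on those vertices and the thresholds are attained exactly when such a vertex is swept over — and the basic DP-points include every intersection of such a vertical line with a line $L\in\E_T$ (or with the horizontal boundary of $I$, which covers the case where the relevant edge of $\E$ is a piece of $\partial I$). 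Therefore both endpoints of $e$ are basic DP-points.

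The only point needing a little care — and the one I expect to be the main (minor) obstacle — is justifying that each $x_i$ delimiting a stripe is indeed the $x$-coordinate of an input vertex, so that the stripe-boundary vertical lines are among those used to define basic DP-points. This follows from the iterative construction of the stripes: $\wb$ is a sum of point masses located at triangle vertices $p_{i,j}$, so the function $x\mapsto\wb((x_{i-1},x]\times[0,N])$ is a step function that jumps only at such $x$-coordinates; the chosen $x_i$ is the unique value where this function first reaches $\delta^2\cdot w(\T)$, hence it coincides with one of those jump points, i.e.\ with some $x_{i,j}$. (The final $x_{i^*}=N$ and the initial $x_0=0$ are corners of $I$, which are basic DP-points by fiat.) Once this is in place, the two displayed facts — $|\L_0|=O(1/\delta^6)$ and all endpoints basic DP-points — are immediate, completing the proof.
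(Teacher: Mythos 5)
Your counting argument is identical to the paper's: sum the per-stripe bound $|\bar{\E}_i|=O(1/\delta^4)$ from Lemma~\ref{lem:partition-stripes} over the $O(1/\delta^2)$ stripes to get $|\L_0|=O(1/\delta^6)$. That part is fine.

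For the basic DP-points claim, however, you have read $\L_0$ differently from the paper, and this changes the whole argument. The paper defines $\E_i\subseteq\E$ as the set of \emph{full} triangle boundary edges that cross $S_i$ from left to right, takes $\bar\E_i\subseteq\E_i$, and sets $\L_0:=\cup_i\bar\E_i$ — a set of \emph{entire} triangle edges. Consequently the endpoints of every line in $\L_0$ are corners of triangles, i.e.\ vertices of the triangulations of input polygons, and the proof of the second claim is the one-line observation in the text just before the proposition: ``As the endpoints of all lines in $\L_0$ are vertices of the input polygons, they are basic DP-points.'' Your proof instead treats each element of $\L_0$ as the \emph{restriction} of such an edge to its stripe, so you characterize the endpoints as intersection points of a line in $\E_T$ with the stripe boundaries $x=x_{i-1}$ and $x=x_i$, and then you have to supply the extra step that each $x_i$ coincides with the $x$-coordinate of some input vertex (which you justify correctly, using that $\wb$ is a sum of point masses so the threshold is attained exactly at a vertex coordinate). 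The argument you give is sound under your reading, and the auxiliary claim is in fact used in the paper's proof of Proposition~\ref{prop:number-lines-L_ext} for the lines $L'_0,L'_1,\ldots$; but for the present proposition it is a detour, and you should note that the intended reading of $\L_0$ makes the claim immediate without any side lemma about stripe-boundary placement.
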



\subsubsection{Creating a Subdivision of the Plane}

Starting with the lines in $\L_{0}$ we create a partition of the
plane with some useful properties that will allow us later to find
a balanced cheap cut. To this end, we construct a set of lines $\Le$
that complete $\L_{0}$ to a subdivision of $I$. The endpoints of all lines in $\Le$ will be at
basic DP-points. Initially, we define $\Le$
to be the four boundary lines of $I$.

Recall that we have two types of cells: dense and light cells. First, we consider the dense cells, and
for each dense cell $C$ we perform the following operation.
Suppose that the consecutive corners of $C$ are the points $(x_{i-1}y_{1}^{L})$, $(x_{i-1},y_{2}^{L})$, $(x_{i},y_{2}^{R})$, $(x_{i},y_{1}^{R})$, see Figure~\ref{fig:stripes-cells}. We add to $\Le$ the two vertical lines $L[(x_{i-1},y_{1}^{L}),(x_{i-1},y_{2}^{L})]$
and $L[(x_{i},y_{1}^{R}),(x_{i},y_{2}^{R})]$. 

In the second step, we take each line $L\in\L_{0}$ and add a sequence
of lines to $\Le$ in order to connect both endpoints of $L$ with
another line in $\L_{0}\cup\Le$. While doing this we carefully ensure
that the number of lines in $\Le$ is upper bounded by $O(1/\delta^{8})$.

\begin{figure}
\begin{centering}
\includegraphics[scale=0.60]{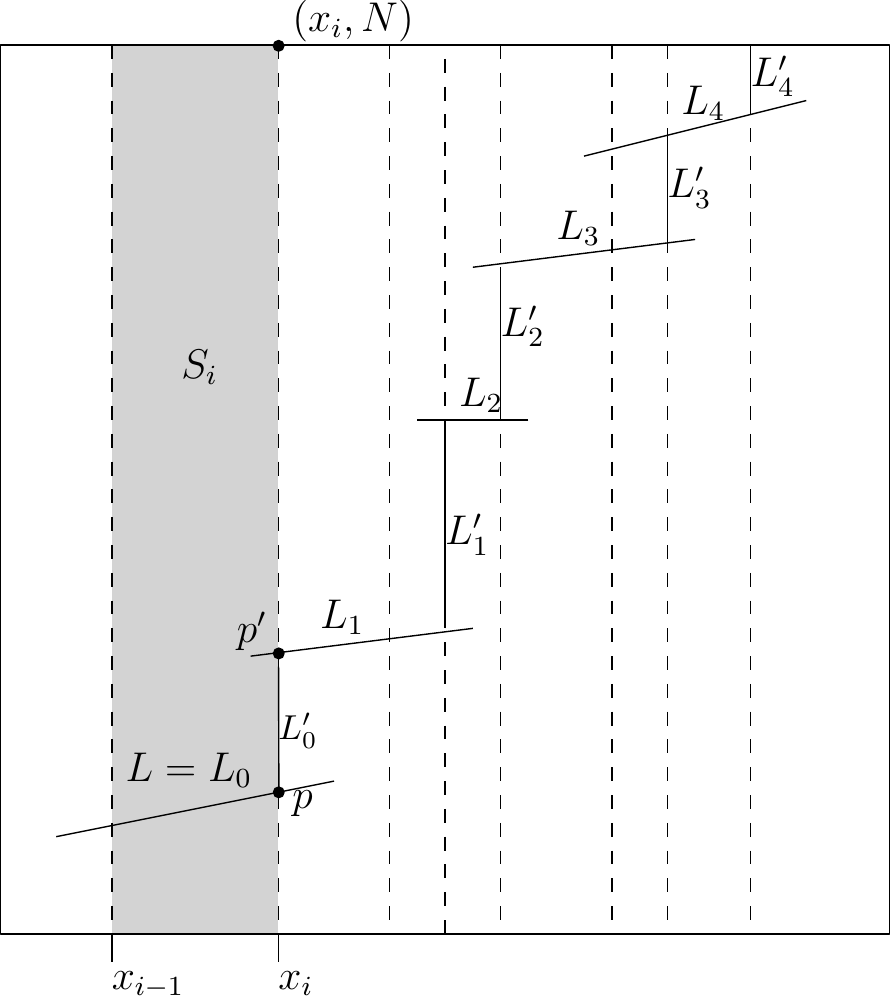}~~~~~~~~~~~~~\includegraphics[scale=1.2]{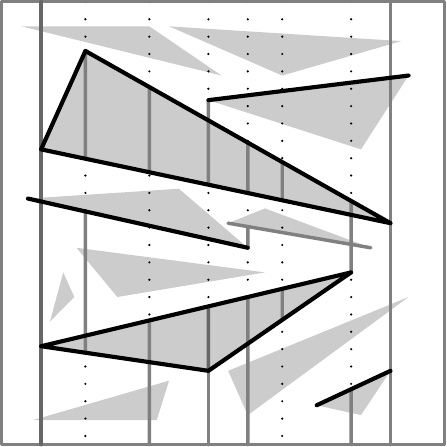}
\par\end{centering}

\caption{\label{fig:cells} Left: A sequence of lines
$L'_{0},L_{1},L'_{1},...$ that connect a line $L \in \L_{0}$ with the 
top boundary of~$I$. 
Right: The set of lines $\L_{0} \cup \Le$ subdividing the input area. 
The the lines from $\L_{0}$ are drawn in black, the lines from $\Le$ are gray. The gray lines in the interior of the two large triangles will be removed while transforming $\L_{0} \cup \Le$ into $\L$.}
\end{figure}

Let $L_{0}\in\L_{0}$. The idea is to define a set of lines which
connect $L_{0}$ to the right with either another line in $\L_{0}$
or with the top or right boundary of~$I$. Let $x_{i}$ be
the rightmost value in $\{x_{1},...,x_{i^{*}}\}$ such that $L_{0}$
touches the vertical line $\{x_{i}\}\times[0,N]$, and let $p$ be
the point where $L_{0}$ touches
this line. If $p$ is contained
in the top or right boundary of~$I$ or in a line $\bar{L}\in\L_{0}$
that crosses $S_{i+1}$, we stop. Now suppose that the line
$L':=L[p,(x_{i},N)]$ does not touch an edge in $\E$ that crosses
$S_{i+1}$. In that case we define $L'_{0}:=L'$ and we stop. Otherwise,
suppose that $L'$ touches an edge $L''\in\E$ crossing $S_{i+1}$
at a point $p'$ and $p'$ is the bottom-most point with that property
(possibly $p'=p$). We define $L'_{0}:=L[p,p']$.
If $L''\in\L_{0}$, or if $p'$ is at the top boundary of $I$, we stop. Otherwise, we
define $L_{1}:=L''$ and we continue iteratively with $L_{1}$. Eventually,
we have obtained a collection of lines $L_{0},L'_{0},L_{1},L'_{1},...,L_{\ell},L'_{\ell}$
where $L'_{\ell}$ touches the outside boundary of $I$ or a line in
$\L_{0}$. Note that all lines $L'_{i}$ are vertical. We add the
set $\{L'_{0},L_{1},L'_{1},...,L_{\ell},L'_{\ell}\}$ to $\Le$. See Figure~\ref{fig:cells}
for a sketch.

Similarly, we connect $L_{0}$ to the left with another line in $\L_{0}\cup\Le$
or with the bottom or left boundary of~$I$. 
We do this procedure with every line $L\in\L_{0}$. In case that $\Le$
contains two lines $L,L'$ with $|L\cap L'|>1$ (i.e., they are parallel
and overlap), we replace $L$ and $L'$ by $L\cup L'$.


\begin{prop}\label{prop:number-lines-L_ext} 
The number of lines in $\L_{0}\cup\Le$ is upper bounded by $O(1/\delta^{8})$.
The endpoints of lines in $\L_{0}\cup\Le$ are basic DP-points.
\end{prop}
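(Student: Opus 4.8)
The plan is to bound the number of lines in $\L_0 \cup \Le$ separately for $\L_0$ and for $\Le$, and to track the endpoints throughout. For $\L_0$ we already have the bound $O(1/\delta^6)$ from Proposition~\ref{prop:number-lines-L_0}, so it remains to show $|\Le| = O(1/\delta^8)$. First I would count the vertical lines added in the first step of the construction: we add at most two vertical lines per dense cell, and since each stripe is subdivided into $O(1/\delta^4)$ cells and there are $O(1/\delta^2)$ stripes, this contributes $O(1/\delta^6)$ lines. These lines sit on the $O(1/\delta^2)$ vertical lines $\{x_i\} \times [0,N]$, so on each such vertical line they form a collection of disjoint vertical segments.

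Next I would bound the lines added in the second step. For each line $L_0 \in \L_0$ we build two sequences (one going right, one going left), and each sequence alternates between vertical ``connector'' segments $L'_j$ and horizontal-ish segments $L_j$ which are subsegments of lines already in $\L_0$. The key observation is that each connector segment $L'_j$ lies on one of the $O(1/\delta^2)$ vertical lines $\{x_i\}\times[0,N]$, and the segments $L_j$ are taken from the $O(1/\delta^6)$ lines of $\L_0$. So the total number of newly added segments is at most $O(1/\delta^6) \cdot (\text{length of each sequence})$. I would argue the sequence length is $O(1/\delta^2)$: the sequence terminates as soon as it hits the boundary of $I$ or a line of $\L_0$ crossing the next stripe, and each step $L_j \to L'_j \to L_{j+1}$ advances to the next stripe to the right (or left), so there are at most $i^* = O(1/\delta^2)$ steps before we exhaust the stripes. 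Multiplying gives $O(1/\delta^8)$ lines from this step, which dominates and yields the claimed bound $|\L_0 \cup \Le| = O(1/\delta^8)$. The final merging step (replacing overlapping parallel lines $L, L'$ by $L \cup L'$) only decreases the count.

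For the endpoints, I would check that every vertex introduced is a basic DP-point. The lines in $\L_0$ have basic-DP-point endpoints by Proposition~\ref{prop:number-lines-L_0}. The four boundary lines of $I$ have the corners of $I$ as endpoints, which are basic DP-points by definition. The vertical segments added for dense cells have endpoints of the form $(x_{i-1}, y)$ where $y$ is a coordinate of a corner of the cell, i.e., an intersection of a vertical line $\{x_i\}\times[0,N]$ with a line in $\E_T$ (or the boundary of $I$) --- precisely a basic DP-point. The endpoints $p, p'$ arising in the second step are each either such an intersection point, or a point on the boundary of $I$ of the form $(x_i, N)$ (again an intersection of a vertical line through an input $x$-coordinate with the horizontal boundary), or an endpoint of a line already in $\L_0 \cup \Le$; in all cases a basic DP-point. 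Since the merging step only takes unions of existing segments, no new endpoints are created.

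The main obstacle is the counting in the second step: one must be careful that the sequences $L_0, L'_0, L_1, L'_1, \ldots$ genuinely have length $O(1/\delta^2)$ and do not revisit stripes or loop, and that no line of $\L_0$ spawns an unbounded number of distinct new segments. The construction is set up so that each iteration moves strictly to the next stripe (the index $i$ in $x_i$ strictly increases going right), which forces termination within $i^*$ steps; making this ``strictly increases'' claim precise --- in particular checking that when we land on a line $\bar L \in \L_0$ crossing $S_{i+1}$ we stop, so we never re-process the same stripe --- is the delicate point. Everything else is routine bookkeeping over the $O(1/\delta^2)$ stripes and $O(1/\delta^4)$ cells per stripe established earlier.
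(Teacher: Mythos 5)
Your proof is correct and follows essentially the same route as the paper: bound $|\L_0|$ via Proposition~\ref{prop:number-lines-L_0}, count the two dense-cell boundary lines per dense cell ($O(1/\delta^6)$ total), and argue that each of the $O(1/\delta^6)$ sequences built in the connecting step has length $O(1/\delta^2)$ because the stripe index strictly increases, then check that each type of endpoint is a basic DP-point. One small slip: the intermediate segments $L_1, L_2, \ldots$ in a sequence are subsegments of edges in $\E$ (arbitrary triangle edges), not necessarily of lines in $\L_0$ as you wrote; the counting and endpoint argument are unaffected since the bound comes from the number of sequences ($O(1/\delta^6)$) times the sequence length ($O(1/\delta^2)$), and vertices of triangle edges are still basic DP-points.
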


The lines in $\L_{0}\cup\Le$ give a subdivision of the input square
$I$ (see Figure~\ref{fig:cells}). By construction, the only lines that touch triangles from $\T$ are vertical
lines in $\Le$, 
and if a line $L \in \Le$ touches a triangle $T \in \T$, then $L$ crosses $T$ 
(recall here that the triangles are open sets).
We want to ensure that each triangle is crossed only
$O(1)$ times. However, right now it can happen that a triangle is crossed
a superconstant number of times. We will
show that if a triangle $T$ is crossed by more than four lines
in $\Le$, then there must be a dense cell $C$ such that $C\subseteq T$.
In such a case we say that \emph{$T$ owns the cell $C$}.
\aad{We denote by $\T_{\own}$ the set of triangles from $\T$ which own some cell.} 

\begin{lem}\label{lem:cell-owning} Let $T\in\T$ be a triangle which
is crossed by more than four lines in $\Le$. Then 
\aad{$T \in \T_{\own}$.}
\end{lem}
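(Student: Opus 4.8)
The plan is to argue by counting how lines get added to $\Le$ and tracing which of them can cross a fixed triangle $T$. Recall that the only lines in $\L_0 \cup \Le$ that touch (hence cross) a triangle are the vertical lines in $\Le$, and these come from exactly two sources: the two vertical ``side'' lines $L[(x_{i-1},y_1^L),(x_{i-1},y_2^L)]$ and $L[(x_i,y_1^R),(x_i,y_2^R)]$ added for each dense cell $C$, and the vertical lines $L'_j$ appearing in the connecting sequences $L_0,L'_0,L_1,L'_1,\dots$ built for each line $L_0 \in \L_0$. So first I would split the lines crossing $T$ into these two types and bound the contribution of the second type.

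For the connecting-sequence lines: a vertical line $L'_j$ in such a sequence runs from a point $p$ on some line of $\E$ (or $\L_0$, or a boundary edge) straight up (or down) until it first hits another line of $\E$ crossing the adjacent stripe, or a boundary edge, or a line of $\L_0$. The key observation is that a single triangle $T$ has only three edges, and each edge of $T$ lies on one line of $\E$; moreover $T$ spans a bounded number of stripes near its leftmost and rightmost extent — but more to the point, I would argue that within any one stripe $S_i$, the portion of $\partial T$ inside $S_i$ consists of at most a couple of subsegments of lines in $\E_i$ (or the edges of $T$ enter/leave the stripe). A vertical segment $L'_j$ can cross $T$ only by entering through one edge of $T$ and leaving through another (since $T$ is convex, a vertical line meets $\partial T$ in at most two points). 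Because the construction of $L'_j$ stops as soon as it touches \emph{any} line of $\E$ crossing the relevant stripe, and the edges of $T$ are such lines (where they cross the stripe left-to-right), each ``upward'' segment $L'_j$ that would otherwise pierce $T$ is forced to stop at the lower edge of $T$ — so it does not cross $T$ at all, unless that edge of $T$ does not cross the stripe from left to right (i.e.\ a vertex of $T$ is inside the stripe). I would push this to show: at most a constant number (in fact at most two on each side, hence four total) of connecting-sequence vertical lines can cross $T$, namely those occurring in stripes that contain a vertex of $T$.

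Having disposed of the connecting-sequence lines, any crossing of $T$ by a fifth or later line must be a dense-cell side line $L[(x_{i-1},\cdot),(x_{i-1},\cdot)]$ or $L[(x_i,\cdot),(x_i,\cdot)]$ for some dense cell $C$ in stripe $S_i$. So if $T$ is crossed by more than four lines of $\Le$, there is a dense cell $C$ in some stripe $S_i$ both of whose vertical side lines — or at least one whose presence forces an intersection pattern — interact with $T$. The final step is to show that such a dense cell $C$ must in fact be contained in $T$. Here I would use that $C$ is \emph{dense}, meaning no line of $\E_i$ crosses $C$ from left to right, so the top and bottom boundaries of $C$ are consecutive left-to-right segments of $\E$ (or of $\partial I$) with nothing of $\E$ strictly between them inside $S_i$; combined with $T$'s boundary crossing the stripe and the vertical side line of $C$ crossing $T$, the only way to avoid $\partial T$ passing through $C$'s interior (which would contradict denseness, since edges of $T$ lie on lines of $\E_i$ when they cross the stripe left-to-right) is for $C$ to lie entirely between two edges of $T$ — i.e.\ $C \subseteq T$. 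Then $T$ owns $C$ and $T \in \T_{\own}$ by definition.

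The main obstacle I expect is the careful case analysis around the \emph{endpoints of edges of $T$}: where a vertex of $T$ lies strictly inside a stripe, that edge does not cross the stripe left-to-right, so it is invisible to the ``stop at the first line of $\E$'' rule, and a connecting line may slip past it. Controlling how many such ``exceptional'' stripes there are (three vertices, hence at most three stripes, giving at most a small constant number of exceptional crossings) and verifying that away from these stripes every vertical line in $\Le$ that would cross $T$ is blocked by an edge of $T$ — that bookkeeping, together with correctly handling the boundary-of-$I$ cases, is where the real work lies. The denseness argument in the last paragraph is conceptually the crux but should be short once the right picture (Figure~\ref{fig:stripes-cells}, right) is in hand.
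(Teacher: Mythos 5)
Your plan coincides with the paper's proof: the paper likewise splits the vertical lines of $\Le$ that cross $T$ according to whether they come from a dense cell or from the connecting procedure, uses the three corners of $T$ to argue (a pigeonhole over boundary $x$-coordinates, yielding at most four ``exceptional'' coordinates) that a fifth crossing line must lie at some $x_i$ with no corner of $T$ in $S_i$ or $S_{i+1}$ and hence cannot be a connecting line, and then uses denseness of the corresponding cell $C$ together with the fact that the two edges of $T$ crossing $S_i$ lie in $\E_i$ to conclude $T\cap S_i = C$, so $T$ owns $C$. The ``bookkeeping'' you flag as the main obstacle is exactly what the paper's short pigeonhole argument resolves.
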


\begin{proof} 
Suppose that there
is a triangle $T\in\T$ that is crossed by five lines in $\Le$. As
$T$ has three corners, there must be a vertical line $L\in\Le$ crossing
$T$ with an $x$-coordinate $x_{i}$ such that neither $S_{i}$ nor
$S_{i+1}$ contain a corner of $T$. By construction, $L$ cannot
have been added when connecting the lines in $\L_{0}$ with the boundary
of $I$. Hence, $L$ has been added due to a dense cell $C$ with $C\subseteq S_{i}$
or $C\subseteq S_{i+1}$. Assume the former is true. As $S_{i}$ does
not contain a corner of $T$, two edges of $T$ must cross $S_{i}$.
As $T\cap S_{i}\subseteq C$ and $C$ is a dense cell, this implies
that $T\cap S_{i}=C$. 
\aad{Thus, $T$ owns $C$ and therefore $T \in \T_{\own}$.}
\end{proof}

Since the total number of cells, similarly as the total number of lines in $\L_0$, is upper bounded by $O((1/\delta)^{6})$ (see Proposition \ref{prop:number-lines-L_0}),
this is also an upper bound on the number of triangles that are crossed
by more than four lines. We change the subdivision given by the lines
$\L_{0}\cup\Le$ by ``cutting out'' every triangle 
\aad{$T \in \T_{\own}$.}
For constructing the new set of lines $\L$, let $\L'$ be the set of all boundary edges of the triangles in $\T_{\own}$, together with the line segments given by $L\setminus\bigcup_{T\in\T_{\own}}T$
for each line $L\in\L_{0}\cup\Le$. We add to $\L$ the lines resulting from subdividing $\L'$ so that any two lines can touch only at their endpoints.
\aad{
Formally, we add to $\L$ each line $L[p,p']$ such that $L[p,p']$ is contained in a line in $\L'$, each of the points $p$ and $p'$ is either an endpoint of a line in $\L'$, or a point where two lines in $\L'$ touch, and no line in $\L'$ touches $L[p,p']$ at any point other than $p$ and $p'$.
}


\begin{prop}\label{prop:number-lines-L} 
The number of lines in $\L$ is upper bounded by $(\frac{1}{\delta})^{O(1)}$. 
\aad{The endpoints of lines in $\L$
are basic DP-points.}
\end{prop}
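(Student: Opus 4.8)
The plan is to bound the number of lines in $\L$ by carefully accounting for all the ways lines get created, using the bounds already established. The starting point is Proposition~\ref{prop:number-lines-L_ext}, which gives $|\L_0 \cup \Le| = O(1/\delta^8)$. The new set $\L$ is obtained by two modifications: first, we add all boundary edges of triangles in $\T_{\own}$, and second, we subdivide everything (the old lines from $\L_0 \cup \Le$, restricted to the complement of the owned triangles, plus the new triangle boundaries) at every point where two of these segments cross, so that the resulting segments pairwise intersect only at endpoints.

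First I would bound $|\T_{\own}|$. By Lemma~\ref{lem:cell-owning}, every triangle crossed by more than four lines of $\Le$ owns a cell, and distinct such triangles own distinct cells (since a cell lies in a single triangle's interior when owned). By Proposition~\ref{prop:number-lines-L_0} the number of cells is $O(1/\delta^6)$, hence $|\T_{\own}| = O(1/\delta^6)$, and the set $\L'$ of ``base segments'' (boundary edges of owned triangles, plus the $O(1/\delta^8)$ pieces $L \setminus \bigcup_{T \in \T_{\own}} T$) has cardinality $O(1/\delta^8)$: the triangle boundaries contribute $3 \cdot O(1/\delta^6)$ segments, and each line of $\L_0 \cup \Le$ gets cut into at most a constant number of pieces, because a straight line can cross the interiors of at most $O(1)$ of the pairwise non-touching owned triangles in a way that — no, that is not quite right, so I would instead observe that removing $\bigcup_{T \in \T_{\own}} T$ from a single line $L$ leaves at most $|\T_{\own}| + 1$ pieces, giving $O(1/\delta^8) \cdot O(1/\delta^6) = O(1/\delta^{14})$ base segments; this polynomial-in-$1/\delta$ bound is all that the proposition claims.

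Next I would bound the blow-up from the final subdivision step. A collection of $m$ straight segments in the plane, when cut at all pairwise intersection points, yields at most $m + 2\binom{m}{2} = O(m^2)$ segments, since each of the at most $\binom{m}{2}$ crossing points splits at most two segments. Applying this with $m = |\L'| = O(1/\delta^{14})$ gives $|\L| = O(1/\delta^{28}) = (1/\delta)^{O(1)}$, as claimed. For the second sentence of the proposition: by Proposition~\ref{prop:number-lines-L_ext} the endpoints of all lines in $\L_0 \cup \Le$ are basic DP-points, and the endpoints of boundary edges of triangles in $\T_{\own}$ are vertices of input triangles, hence also basic DP-points. Every new endpoint created in the subdivision step is an intersection point of two segments, each of which lies on a line segment between two basic DP-points; by the definition of additional DP-points, such intersection points are DP-points — but we need \emph{basic} DP-points. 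This is the one subtle place: I expect one must invoke that crossings of edges in $\E$ with the vertical lines of $\Le$ (and with each other) were already accounted for when the basic DP-points were defined, since the basic DP-points include all intersections of vertical lines through input $x$-coordinates with lines in $\E_T$. I would argue that every line in $\L'$ lies on either a boundary edge of a triangle (an element of $\E$) or a vertical line at some $x_i$ (which is of the form $x = x_{i,j}$), so every pairwise crossing is exactly the kind of intersection point declared a basic DP-point in the construction of the DP-point set.

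The main obstacle is precisely this last point — verifying that the new vertices are \emph{basic} rather than merely additional DP-points — because the whole reduction (Lemma~\ref{lem:good-cut-suffices} and Lemma~\ref{lem:reduction-to-triangles}) insists the cut vertices be basic DP-points, and additional DP-points would not suffice there. Everything else is routine counting: bounding $|\T_{\own}|$ via the cell count, bounding $|\L'|$, and bounding the subdivision blow-up by $O(m^2)$. I would structure the write-up as: (1) $|\T_{\own}| = O(1/\delta^6)$; (2) $|\L'| = (1/\delta)^{O(1)}$; (3) subdivision squares the count, so $|\L| = (1/\delta)^{O(1)}$; (4) the basic-DP-point claim, handled by tracing each segment of $\L'$ back to a line in $\E$ or a vertical line through an input $x$-coordinate and appealing to the definition of basic DP-points.
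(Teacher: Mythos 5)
Your argument is correct, and since the paper leaves this proposition without an explicit proof (it is meant to follow directly from the construction and the preceding propositions), your write-up is essentially the intended direct-counting argument, just with looser exponents than necessary; any $(1/\delta)^{O(1)}$ bound suffices. The one place where your reasoning skips a step is the claim that ``every pairwise crossing is exactly the kind of intersection point declared a basic DP-point.'' As written this is only true for crossings in which at least one of the two segments is vertical: the basic DP-points are intersections of a vertical line $x=x_{i,j}$ with an edge of $\E_T$, so a crossing of two segments both lying on edges in $\E$ would \emph{not} be covered. To close this you must observe that such $\E$--$\E$ crossings do not occur: the triangles in $\T$ are pairwise non-touching (disjoint open convex sets), hence no boundary edge of one triangle can pass through the interior of another, so two edges in $\E$ can only meet at a shared triangle vertex (itself a basic DP-point), never at a proper crossing. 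With that observation added, every touching point of two segments of $\L'$ is either a triangle vertex or an intersection of a vertical line at a stripe boundary $x_i$ with an edge of $\E$ (the boundary cases $x_i\in\{0,N\}$ degenerate to triangle vertices or corners of $I$), and all of these are basic DP-points, exactly as the proposition requires. The remaining counting steps---$|\T_{\own}|=O(1/\delta^6)$ via the cell bound, $|\L'|=(1/\delta)^{O(1)}$, and the $O(m^2)$ subdivision blow-up---are all sound.
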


The lines in $\L$ induce a connected graph which gives us a subdivision of the plane.
We call each
connected component of $I\setminus\L$ a \emph{face}. Let $\F$ denote
the set of all faces. We prove now some crucial properties of $\F$
that will allow us later to find a balanced cheap cut 
as a part of the subdivision given by~$\L$. 
One important property
of the subdivision is that for each face $F\in\F$ we can bound the
weight of the triangles touching it, unless $F$ coincides with some
triangle $T \in \T$.

\begin{lem}\label{lem:weight-of-faces} Consider a face $F\in\F$.
Either $F=T$ for some triangle $T$ or the total weight of triangles
touching $F$ is upper bounded by $3\delta^{2}\cdot w(\T)$.
\end{lem}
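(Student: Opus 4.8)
\textbf{Proof plan for Lemma~\ref{lem:weight-of-faces}.}

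The plan is to analyze a face $F\in\F$ according to how it was created by the three-step construction, tracking which triangles can touch it. First I would observe that $\L$ was obtained from $\L_0\cup\Le$ by two operations: (i) inserting the boundary edges of the triangles in $\T_{\own}$, and (ii) subdividing and cutting out those triangles' interiors. So each face $F\in\F$ is either (a) the interior of some triangle $T\in\T_{\own}$, in which case $F=T$ and we are done, or (b) a connected component of $I\setminus(\L_0\cup\Le)$ with the owned triangles removed. In case (b) the face $F$ is contained in some connected component $C^\circ$ of $I\setminus(\L_0\cup\Le)$, and the key point is that $C^\circ$ is contained in a single cell (the lines $\L_0$ already subdivide $I$ into cells, and $\Le$ only refines this further). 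So it suffices to bound the total weight of triangles touching a single cell, minus the owned triangles that were cut out.

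The main case analysis then splits on whether the cell is light or dense. If $F$ lies inside a \emph{light} cell $C$, then by Lemma~\ref{lem:partition-stripes} the total weight of triangles touching $C$ is at most $\delta^4\cdot w(\T)\le 3\delta^2\cdot w(\T)$, and since $F\subseteq C$ any triangle touching $F$ also touches $C$, so the bound follows immediately. If $F$ lies inside a \emph{dense} cell $C$, the situation is more delicate: a dense cell has no edge of $\E_i$ crossing it left-to-right, but triangles with a \emph{corner} inside the stripe $S_i$ can still touch $C$. Here I would use that $C\subseteq S_i$ for some stripe with $\wb(S_i)<\delta^2\cdot w(\T)$, and that any triangle $T$ touching $C$ either (1) has a vertex in $S_i$, contributing at most $w(T)=3\wb(v)$ for that vertex, so these triangles have total weight at most $3\wb(S_i)<3\delta^2\cdot w(\T)$; or (2) crosses $S_i$ without having a corner there, in which case two of its edges cross $S_i$ left-to-right — but then, as in the proof of Lemma~\ref{lem:cell-owning}, $T\cap S_i$ equals the dense cell $C$ entirely, so $T$ owns $C$, hence $T\in\T_{\own}$ and $T$ was cut out, so $T$ does not touch $F$ (its interior was removed and replaced by its boundary, which bounds a different face $F'=T$). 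This is exactly why we needed the cutting-out step.

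I expect the main obstacle to be case~(2) of the dense-cell analysis: arguing cleanly that after cutting out $\T_{\own}$, no triangle that crosses the stripe without a corner can still \emph{touch} $F$. One has to be careful because a dense cell $C$ could in principle be owned by a triangle $T$ while another triangle $T'$ also satisfies $T'\cap S_i = C$ — but non-overlapping triangles cannot both equal $C$ on $S_i$, so at most one triangle owns each dense cell, and every corner-free crossing triangle of $C$ owns $C$; thus there is at most one such triangle and it is removed. After removal, $F$ is a proper subset of $C\setminus T$, and the only triangles still touching $F$ are those of type~(1), giving the bound $3\delta^2\cdot w(\T)$. Combining the light and dense cases (the light-cell bound $\delta^4\cdot w(\T)$ being dominated by $3\delta^2\cdot w(\T)$) yields the lemma.
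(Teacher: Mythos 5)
Your dense-cell analysis is essentially the paper's argument (even a bit more explicit about the role of cutting out $\T_{\own}$), and that part is sound. The gap is in your preliminary claim that every connected component of $I\setminus(\L_0\cup\Le)$ is contained in a single cell, which you justify by saying ``the lines $\L_0$ already subdivide $I$ into cells.'' This is false. The cells are the connected components of $S_i\setminus\bar{\E}_i$, and they are bounded on the left and right by the vertical stripe boundaries $\{x_{i-1}\}\times[0,N]$ and $\{x_i\}\times[0,N]$. Those vertical lines are \emph{not} in $\L_0$ (which consists only of the triangle edges $\bar\E_i$), and $\Le$ contains vertical segments only along the sides of dense cells and along the connector paths; it does not close off the full boundary between two adjacent stripes. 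Consequently a face that touches only light cells can pass from a light cell in $S_i$ into a light cell in $S_{i+1}$ wherever no vertical segment was inserted, and in general can span up to $O(1/\delta^2)$ stripes. (The fact that your light-cell bound came out as $\delta^4\cdot w(\T)$, much stronger than the $3\delta^2\cdot w(\T)$ the lemma asks for, is a hint that the premise is too strong.)

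The paper's proof handles this by proving something weaker but correct: for a face $F$ that touches no dense cell, the intersection $F\cap S_i$ is contained in at most \emph{one} light cell per stripe. This is shown via a separation argument: any two cells of the same stripe are separated by some line $L\in\L_0$, and by the construction of $\Le$ that line is part of a path in $\L_0\cup\Le$ running from one boundary of $I$ to another, so no face can reach both cells. Summing the per-cell bound $\delta^4\cdot w(\T)$ over the at most $1/\delta^2$ stripes gives $\delta^2\cdot w(\T)$, which is what the lemma needs. Your write-up skips this separation argument and jumps to the false one-cell claim; fixing it requires both the within-stripe path argument and allowing the face to range over many stripes. A secondary, smaller slip: in your case~(2) for dense cells you write ``after removal, $F$ is a proper subset of $C\setminus T$,'' but if $T$ owns $C$ then $C\subseteq T$, so $C\setminus T=\emptyset$; the correct dichotomy is that either $C$ is owned (and then the only face touching $C$ is $F=T$, done), or $C$ is not owned (and then no corner-free crossing triangle touches $C$ at all). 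This does not affect your conclusion but should be stated correctly.
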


\begin{proof}[Proof sketch.]
If $F$ does not coincide with any triangle $T$, then there are two
cases. The first case is that $F$ touches a dense cell $C$. In that
case, due to the vertical lines added for $C$, the face $F$ is in
fact contained in~$C$. As no triangles cross $C$, the weight of triangles
touching $F$ is bounded by $3\cdot\wb(C)\le3\delta^{2}\cdot w(\T)$.
If $F$ does not touch any dense cell, then one can show that it touches
at most one (light) cell per stripe $S_{i}$. As there are at most $1/\delta^{2}$
stripes and the total weight of triangles touching any light cell
is at most $\delta^{4}\cdot w(\T)$, the weight of triangles touching
$F$ is at most $\delta^{2}\cdot w(\T)$.
\end{proof}

The second 
important property of our subdivision is that each triangle is 
touched by at most a constant number of lines in $\L$. 

\begin{lem}\label{lem:bounded-crossing-number} For each triangle
$T\in\T$ there are at most four lines $L\in\L$ touching $T$. \end{lem}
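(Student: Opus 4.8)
The plan is to argue by cases according to how the lines in $\L$ touching a triangle $T$ arise. Recall the three types of edges that make up $\L$: boundary edges of triangles in $\T_{\own}$, and the two remaining types inherited from $\L_0 \cup \Le$, namely (near-)horizontal edges lying on lines of $\E$ (the $\L_0$-edges and the $L_i'$ that are actually horizontal — wait, no, the $L_i'$ are vertical) and vertical edges added in the two construction steps for $\Le$. The key structural facts I would use are: (i) by construction the only lines of $\L_0 \cup \Le$ that \emph{touch} (hence cross) a triangle interior are the vertical lines in $\Le$, since the lines of $\L_0$ lie on triangle boundary edges and the triangles are open; (ii) Lemma~\ref{lem:cell-owning} says that any triangle crossed by more than four lines of $\Le$ is in $\T_{\own}$; and (iii) when we pass from $\L_0 \cup \Le$ to $\L$ we delete exactly the parts of those lines lying inside triangles of $\T_{\own}$, and we add the boundary edges of the triangles in $\T_{\own}$.

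First I would handle a triangle $T \in \T_{\own}$. For such a triangle, every line of $\L_0 \cup \Le$ that crossed $T$'s interior has been cut out, so no interior-crossing lines survive in $\L$; the only lines of $\L$ that touch $T$ are then (portions of) the three boundary edges of $T$ itself, which were added explicitly. Since $T$ is open, a line lying on its boundary does not actually touch $T$ as an open set — so in fact zero lines of $\L$ touch $T$ in the relevant sense, and certainly at most four. (I would be a little careful here about whether "touching" in the statement means meeting the open set or the closure; in either reading the boundary of $T$ consists of three segments, and after subdivision each boundary edge may be broken into several pieces, so I need to check that we still get the constant bound — but these pieces lie on only three lines, and for the planar-separator application what matters is the crossing count, which is what Lemma~\ref{lem:cell-owning}'s proof and the phrasing "crossed by" indicate; I will phrase the argument in terms of lines crossing the open triangle, consistent with Lemma~\ref{lem:cell-owning}.)

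Next I would handle a triangle $T \notin \T_{\own}$. By Lemma~\ref{lem:cell-owning} (contrapositive), $T$ is crossed by at most four lines of $\Le$. No line of $\L_0$ crosses $T$'s interior (fact (i)). Passing to $\L$ only removes line segments (those inside triangles of $\T_{\own}$) and adds segments on boundaries of triangles in $\T_{\own}$; a boundary edge of a triangle $T' \in \T_{\own}$ cannot cross the open triangle $T$ because $T$ and $T'$ are non-touching (if $T' \ne T$) and $T \notin \T_{\own}$ so $T' \ne T$ — hence none of the newly added edges crosses $T$. Subdividing a line that crosses $T$ into several collinear pieces does not increase the number of \emph{lines} (directions) crossing $T$, only the number of segments; since the statement counts lines $L \in \L$, and collinear pieces of a single original line of $\Le$ all lie on that one line, they contribute at most the original count. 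Therefore at most four lines of $\L$ cross $T$, and since (as above) no line of $\L$ touches $T$ without crossing it, at most four lines of $\L$ touch $T$.

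The main obstacle I expect is bookkeeping around the notion of "touching versus crossing" after the subdivision of $\L'$ into non-overlapping segments: a single original line of $\Le$ crossing $T$ can be chopped into many short segments, and I must make sure the intended statement (and its use in the subsequent planar-separator argument) really is about lines/crossings rather than segments, so that the constant $4$ is not inflated. A secondary subtlety is confirming that the only lines of $\L_0 \cup \Le$ meeting the open interior of any triangle are the vertical $\Le$-lines — this needs the observations that $\L_0 \subseteq \E$ lies on triangle boundaries, that the four boundary lines of $I$ do not meet triangle interiors, and that the horizontal connector pieces $L_i'$ added in the $\Le$-construction are in fact vertical (so they are among the "vertical lines in $\Le$" already counted by Lemma~\ref{lem:cell-owning}); once these are nailed down, the case analysis closes.
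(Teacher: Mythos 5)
Your argument matches the paper's proof almost exactly: the paper notes that touching implies crossing (since triangles are open sets), that the only lines in $\L$ that can cross a triangle are pieces of lines that were originally in $\Le$, and that by Lemma~\ref{lem:cell-owning} a triangle crossed by more than four lines of $\Le$ lies in $\T_{\own}$ and is therefore cut out, yielding zero surviving crossings. The subdivision issue you flag as the ``main obstacle'' is real but closes cleanly and is implicitly assumed in the paper too: for $T\notin\T_{\own}$, the breakpoints introduced when passing from $\L'$ to $\L$ are either endpoints of lines in $\L'$ or meeting points of two lines of $\L'$, and none of these can lie in the open triangle $T$ (boundary edges of $\T_{\own}$-triangles and edges on $\E$ are disjoint from $T$, and distinct vertical $\Le$-lines through $T$ are parallel, hence disjoint); since $T$ is convex, each original $\Le$-line meeting $T$ intersects it in a single connected subsegment that lies entirely within one piece of the subdivision, so the count of segments in $\L$ touching $T$ equals the count of $\Le$-lines crossing it and is not inflated.
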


\begin{proof} 
From the construction of $\L$ we know that if a triangle is touched by a line $L\in\L$, 
then $T$ is crossed by $L$. Moreover, the only lines from $\L$ which can cross triangles 
from $\T$ are the lines which were originally in $\Le$. 
If a triangle has been crossed by more than four lines of $\Le$ then it owned
a cell and has been cut out when constructing $\L$ (see Lemma~\ref{lem:cell-owning}).
Hence, for each triangle $T\in\T$ there are at most four lines $L\in\L$
touching $T$.
%
\end{proof}


\subsection{Obtaining the Cut}

We use the subdivision given by the lines in $\L$ in order to create
a balanced cheap cut. We construct a planar graph as follows. For
each point $p$ that is an endpoint of a line in $\L$ we create
a vertex $v_{p}$. By Proposition~\ref{prop:number-lines-L} 
such a point~$p$ must be a basic DP-point.
For each line $L[p,p'] \in \L$ we create an edge $e_{p,p'}:=\{v_{p},v_{p'}\}$. We define the
cost $c_{p,p'}$ of each edge $e_{p,p'}$ to be the total weight of
triangles in $\T$ touching $L[p,p']$.

For defining the weights of the faces, if a triangle $T\in\T$ has
non-empty intersection with $m$ faces of the graph, then each of
these faces obtains a $1/m$-fraction of $w(T)$. Denote by $G=(V,E)$
the resulting graph.
The proposition below follows from Proposition 
\ref{prop:number-lines-L} and Lemmas~\ref{lem:weight-of-faces}
and \ref{lem:bounded-crossing-number}, using that $\delta < 1/3$.

\begin{prop}\label{prop:graph} 
The total cost of the edges $E$ is bounded by $O(w(\T))$
and the total weight of the faces of $G$ is $w(\T)$. The total number
of edges is bounded by $(\frac{1}{\delta})^{O(1)}$. The cost of each
face is bounded by $\frac{1}{3}\cdot w(\T)$. 
The boundary of each face touches triangles of weight at most $3 \delta^2 \cdot w(\T)$.
\end{prop}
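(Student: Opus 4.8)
The plan is to verify each of the four assertions of Proposition~\ref{prop:graph} in turn, pulling the needed bounds from the three results cited in the paragraph preceding the statement.

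First I would bound the total number of edges. By construction there is one edge $e_{p,p'}$ for each line segment $L[p,p'] \in \L$, so the number of edges equals $|\L|$, which is $(1/\delta)^{O(1)}$ by Proposition~\ref{prop:number-lines-L}. In the same breath this shows every vertex $v_p$ corresponds to an endpoint of a line in $\L$, hence to a basic DP-point by the same proposition (this is needed later when the cut is extracted, so I would record it).

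Next I would bound the total cost of the edges. The cost $c_{p,p'}$ of an edge is the total weight of triangles in $\T$ touching $L[p,p']$. By Lemma~\ref{lem:bounded-crossing-number} each triangle $T \in \T$ is touched by at most four lines of $\L$, so when we sum $\sum_{e_{p,p'} \in E} c_{p,p'}$ each triangle $T$ is counted at most four times, giving $\sum_{e} c_{p,p'} \le 4\, w(\T) = O(w(\T))$. For the total weight of the faces: each triangle $T$ distributes its weight $w(T)$ as $1/m$-fractions over the $m$ faces it meets, so summing the face weights gives exactly $\sum_{T \in \T} w(T) = w(\T)$.

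Finally I would bound the cost (weight) of an individual face $F$ and the weight of triangles its boundary touches. For the boundary bound: the boundary of $F$ consists of lines in $\L$, and a triangle touching $\partial F$ is touched by some line of $\L$; by Lemma~\ref{lem:weight-of-faces}, either $F = T$ for a single triangle $T$ — in which case $\partial F$ touches only that one triangle and possibly triangles adjacent to it, but more directly we can invoke Lemma~\ref{lem:weight-of-faces} on the faces sharing an edge, or simply note the statement's $3\delta^2 w(\T)$ bound is what Lemma~\ref{lem:weight-of-faces} gives for non-degenerate faces while a face equal to a triangle $T$ has $w(T) < w(\T)/3$ by the standing hypothesis $\max_{T} w(T) < w(\T)/3$ — so in all cases the weight of triangles touching $\partial F$ is at most $\max\{3\delta^2 w(\T),\, w(\T)/3\}$, and since $\delta < 1/3$ this is handled by the two displayed conclusions. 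For the cost of a face $F$: the weight assigned to $F$ is a sum of $\frac{1}{m}$-fractions of weights of triangles meeting $F$; if $F = T$ then $T$ meets only $F$ among the faces it is contained in, contributing at most $w(T) < w(\T)/3$; if $F \ne T$ for every $T$, then every triangle meeting $F$ also touches $\partial F$, so the total weight of such triangles is at most $3\delta^2 w(\T) < w(\T)/3$ (again using $\delta < 1/3$), and the face's weight, being bounded by this quantity, is at most $\frac{1}{3} w(\T)$.

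The only mildly delicate point — and the one I would be most careful about — is the case analysis for a face $F$ that coincides with a triangle $T$: here Lemma~\ref{lem:weight-of-faces} does not give a $\delta^2$-type bound, and one must fall back on the global assumption $\max_{T \in \T} w(T) < w(\T)/3$ to control both the face weight and the boundary-touching weight. Everything else is a direct double-counting argument using Lemma~\ref{lem:bounded-crossing-number} together with the edge-count bound from Proposition~\ref{prop:number-lines-L}.
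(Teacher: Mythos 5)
Most of your argument matches the paper's (terse) proof: the double-counting via Lemma~\ref{lem:bounded-crossing-number} for the total edge cost, the edge count via Proposition~\ref{prop:number-lines-L}, and the total face weight are all correct. The gap is in the last claim, that the boundary of \emph{each} face touches triangles of weight at most $3\delta^{2}\cdot w(\T)$. You explicitly concede that for a face $F$ coinciding with a cut-out triangle $T$ you only obtain $w(T)<w(\T)/3$, and you fold this into a bound $\max\{3\delta^{2}w(\T),\,w(\T)/3\}=w(\T)/3$. That is strictly weaker than what the proposition asserts, and the weaker bound is not good enough downstream: in the proof of Lemma~\ref{lem:cheap-cut} one replaces $O(1/\delta)$ face edges of the Arora-et-al.\ separator by paths on face boundaries and charges each path at most the boundary-touching weight of that face; if a single face could contribute $w(\T)/3$ this accounting would not yield an $O(\delta)$-cheap cut.

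The missing observation is that when $F=T$ the boundary $\partial F=\partial T$ touches \emph{no} triangle of $\T$ at all. The triangles in $\T$ are pairwise non-touching open sets: $T\cap\partial T=\emptyset$ because $T$ is open, and for any other $T'\in\T$, a point of $T'\cap\partial T$ would have an open neighbourhood inside $T'$ that necessarily intersects $T$, contradicting disjointness. So the boundary-touching weight of such a face is $0\le 3\delta^{2}w(\T)$; the hypothesis $\max_{T}w(T)<w(\T)/3$ is not needed here (it is needed, and correctly used by you, for the face-weight bound when $F=T$). A secondary, harmless slip: for $F\ne T$ you assert that every triangle meeting $F$ also touches $\partial F$, which fails for a triangle wholly contained in $F$; this step is unnecessary because Lemma~\ref{lem:weight-of-faces} bounds the weight of all triangles touching $F$ directly, which already dominates the face weight.
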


For extracting a balanced cheap cut from $G$ we use the following
theorem due to Arora et al.~\cite{Arora1998}. We note that
this theorem has been used in a similar way in~\cite{AW2013}.
A \emph{V-cycle} $C$
is a Jordan curve in the embedding of a given planar graph $G$ which
might go along the edges of $G$ and also might cross faces of $G$.
The parts of $C$ crossing an entire face of $G$ are called \emph{face
edges.}

\begin{thm}[\cite{Arora1998}] \label{thm:cycle-separator} Let
$G$ denote a planar, embedded graph with weights on the vertices
and faces and with costs on the edges. Let $W$ denote the total weight,
and $M$ the total cost of the graph. Then, for any parameter $\bar{k}$,
we can find in polynomial time a separating V-cycle $C$ such that 
the interior and exterior of $C$ each has weight at most $2W/3$, 
$C$ uses at most $\bar{k}$ face edges, and 
$C$ uses ordinary edges of total cost $O(M/\bar{k})$. 
\end{thm}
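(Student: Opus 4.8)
Since Theorem~\ref{thm:cycle-separator} is quoted verbatim from~\cite{Arora1998}, the plan is to recall the structure of its proof, which proceeds in three stages: reduce the face weights to vertex weights, use a breadth-first layering together with an averaging argument to select a ``cheap cut radius'', and finally convert the resulting cut into a single balanced V-cycle with few face edges.

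First I would eliminate the face weights by passing to the \emph{radial graph} $G^{+}$: keep all vertices and edges of $G$, and for every face $F$ add a new vertex $v_{F}$ of weight $w(F)$ joined by a new \emph{radial} edge of cost $0$ to every vertex on the boundary of $F$; original vertices and edges keep their weight and cost. Then $G^{+}$ is planar and carries all of $W$ on its vertices. A simple cycle of $G^{+}$ that enters and leaves a radial vertex $v_{F}$ projects to a V-cycle of $G$ whose face edge crosses $F$, and conversely a V-cycle of $G$ with $t$ face edges lifts to a simple cycle of $G^{+}$ using $2t$ radial edges; hence it suffices to find in $G^{+}$ a balanced simple cycle separator (each side of weight at most $2W/3$) that uses at most $2\bar{k}$ radial edges and original edges of total cost $O(M/\bar{k})$.

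Next I would root $G^{+}$ at a suitable vertex and run BFS, measuring distance in hops that advance by one face at a time, so that the layers are nicely ordered and the boundary of a ball is easy to analyze (handling the low-diameter case by a separate, easier argument). For a radius $\rho$, the set $\partial B(\rho)$ of edges between the ball $B(\rho)$ and its complement is a cut; each original edge lies on such a boundary for only $O(1)$ values of $\rho$, so $\sum_{\rho}\mathrm{cost}(\partial B(\rho)) = O(M)$, and among $\bar{k}$ equally spaced radii some $\rho^{*}$ satisfies $\mathrm{cost}(\partial B(\rho^{*})) = O(M/\bar{k})$; the same averaging simultaneously bounds by $O(\bar{k})$ the number of faces the cut at radius $\rho^{*}$ threads through, which controls the radial-edge count. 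A standard balancing step---sliding $\rho^{*}$ within its interval or combining it with a nested second radius---makes both the interior and the exterior weigh at most $2W/3$. Finally, $\partial B(\rho^{*})$ is in general a disjoint union of cycles; using planarity and the BFS tree one contracts the interior and the exterior each to a super-vertex, turning the cut into a union of cycles in a planar quotient, and then picks the single cycle that splits the weight in the required ratio. Projecting that cycle back through $G^{+}\to G$ produces the desired V-cycle.

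The main obstacle is the last stage: producing a \emph{single} cycle that is at once balanced at the $2W/3$ threshold, uses at most $\bar{k}$ face edges, and has ordinary-edge cost $O(M/\bar{k})$. The subtlety is that the boundary of a BFS ball is typically a union of many cycles passing through many faces, so the layering (by faces rather than by edges) and the balancing operation must be chosen so that \emph{both} the face-edge count and the ordinary-edge cost are governed by the \emph{same} averaging over $\bar{k}$ thresholds; making these two quantities flow from one argument, rather than trading one off against the other, is the crux of the proof.
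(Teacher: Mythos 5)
The paper does not prove Theorem~\ref{thm:cycle-separator}: it is quoted and used as a black box from Arora et al.~\cite{Arora1998}, so there is no internal proof here to compare your attempt against; the only yardstick is the cited reference itself. Your sketch captures the broad outline one expects in such a cycle-separator argument --- a BFS-style layering, an averaging over $\bar k$ thresholds to extract a cheap cut, and a final step turning that cut into a single balanced cycle --- but the radial-graph reduction you lead with is your own framing rather than the reference's, and the central quantitative step is asserted rather than proved.

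Concretely, two things are glossed. First, the radial edges you introduce carry cost zero, so the averaging over $\bar k$ radii controls only the ordinary-edge cost $O(M/\bar k)$ and says nothing by itself about how many radial edges (equivalently, face edges) the chosen boundary uses; your claim that the ``same averaging simultaneously bounds'' the face-edge count needs its own argument, and the theorem wants the sharper bound of $\bar k$ face edges, not merely $O(\bar k)$. Second, the lift of a V-cycle to $G^+$ is in general a closed walk rather than a simple cycle (a Jordan curve may cross a face more than once, revisiting $v_F$), so only the projection direction of your correspondence is safe, and the balancing step --- sliding a radius or pairing two nested radii --- has to preserve both the cost and the face-edge bound at once, which is exactly where Miller-type proofs put their effort. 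You correctly flag that last point as the crux, but leave it unresolved; as a proof the sketch is incomplete, though since the theorem is cited rather than proved in this paper, supplying one was never the paper's burden.
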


We apply Theorem~\ref{thm:cycle-separator} with $\bar{k}=O(1/\delta)$
to our graph $G$,
which yields the proof of the following lemma. In combination with 
Lemmas~\ref{lem:good-cut-suffices} and \ref{lem:reduction-to-triangles} it then
yields our main theorem.

\begin{lem} \label{lem:cheap-cut}Let $\delta>0$ and let $\T$ be
a set of 
pairwise non-touching
triangles in the plane such that $w(T)<w(\T)/3$ for each
$T\in\T$. Then there exists a balanced $O(\delta)$-cheap $(\frac{1}{\delta})^{O(1)}$-cut
with corners at basic DP-points. \end{lem}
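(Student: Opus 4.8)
The plan is to feed the planar graph $G$ constructed from $\L$ into Theorem~\ref{thm:cycle-separator} and then convert the resulting separating V-cycle into a polygon $\Gamma$ that is a balanced $O(\delta)$-cheap $(1/\delta)^{O(1)}$-cut. First I would recall, via Proposition~\ref{prop:graph}, that $G$ has $(1/\delta)^{O(1)}$ edges, total edge cost $M = O(w(\T))$, total face weight $W = w(\T)$, each face has weight at most $\tfrac13 w(\T)$, and the boundary of each face touches triangles of weight at most $3\delta^2 w(\T)$. I would then apply Theorem~\ref{thm:cycle-separator} with parameter $\bar k = O(1/\delta)$: this yields in polynomial time a V-cycle $C$ whose interior and exterior each carry face weight at most $\tfrac23 w(\T)$, that uses at most $O(1/\delta)$ face edges, and that uses ordinary edges of total cost $O(M\delta) = O(\delta \cdot w(\T))$.

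Next I would read off $\Gamma$ as the closed region bounded by $C$ (equivalently, the union of the faces in the interior of $C$ together with the traversed edges). The bound on the number of edges of $\Gamma$ is straightforward: $\Gamma$ is bounded by a sub-collection of the edges of $\L$ (at most $(1/\delta)^{O(1)}$ of them) plus at most $O(1/\delta)$ face edges, and each face edge crosses a single face, whose boundary has $(1/\delta)^{O(1)}$ edges, so the total number of edges of $\Gamma$ is $(1/\delta)^{O(1)}$; all its corners are endpoints of edges of $\L$ or intersection points of such edges with face edges, hence basic or — for the latter — DP-points obtained as intersections of segments between basic DP-points, i.e. in all cases DP-points, and I should take care that the statement only claims "corners at basic DP-points", so I would route the face edges between basic DP-points as well (they can be chosen to connect two corners of the traversed face, which are basic DP-points by Proposition~\ref{prop:number-lines-L}). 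The balance condition follows because the total weight $w(\T_{\mathrm{in}})$ of triangles contained in $\Gamma$ is at most the face weight of the interior of $C$, which is $\le \tfrac23 w(\T)$ (a triangle contained in the interior has all its face-fractions on interior faces), and symmetrically for the exterior.

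The cheapness of the cut is the step needing the most care, and I expect it to be the main (though modest) obstacle. A triangle $T$ is intersected by $\partial\Gamma$ only if either (i) $\partial\Gamma$ runs along an ordinary edge $L[p,p']\in\L$ that touches $T$, or (ii) a face edge of $C$ crosses a face whose boundary touches $T$, or the face edge itself passes through $T$. For case (i), the total weight of such triangles is at most the total cost of the ordinary edges used by $C$ — but here one must remember an edge of $\L$ can touch (i.e. cross) a triangle, and by Lemma~\ref{lem:bounded-crossing-number} each triangle is touched by at most four edges of $\L$, so assigning $c_{p,p'} = $ (weight of triangles touching $L[p,p']$) overcounts each triangle by at most a factor $4$; thus the weight of triangles in case (i) is at most the ordinary-edge cost $O(\delta\cdot w(\T))$. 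For case (ii), each of the $O(1/\delta)$ face edges lies in a single face $F$; if $F$ is not itself a triangle, the boundary of $F$ touches triangles of weight at most $3\delta^2 w(\T)$ by Proposition~\ref{prop:graph}, and any triangle the face edge crosses must intersect $F$ hence touch its boundary, so each face edge contributes at most $3\delta^2 w(\T)$, giving a total of $O(1/\delta)\cdot 3\delta^2 w(\T) = O(\delta\cdot w(\T))$; if $F = T$ for a triangle, then the face edge is routed so as not to enter $F$ (a V-cycle can go along the boundary of a face instead of crossing it when the face is light — or one simply charges that single triangle $T$, whose weight is $< \tfrac13 w(\T)$, but since $\bar k = O(1/\delta)$ face-triangles would be too many, I would instead argue that a face coinciding with a triangle can always be circumvented along $\partial\Gamma$'s existing edges without crossing it). Summing (i) and (ii), the total weight of triangles intersecting $\partial\Gamma$ is $O(\delta\cdot w(\T))$, so $\Gamma$ is $O(\delta)$-cheap, completing the proof. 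The delicate point I would dwell on in the full write-up is precisely the handling of faces that are triangles, ensuring the V-cycle either avoids them or that they do not accumulate; everything else is bookkeeping on the already-established propositions.
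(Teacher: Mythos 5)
Your overall strategy matches the paper's: apply Theorem~\ref{thm:cycle-separator} to $G$ with $\bar{k}=O(1/\delta)$ and convert the resulting V-cycle into the cut. The substantive difference is how you treat the face edges, and this is exactly where your write-up leaves a genuine gap. You keep each face edge as a geometric segment crossing the interior of the traversed face and then try to bound the damage face-by-face, which forces you into case analysis on whether the face coincides with a triangle $T$. You correctly flag this as the delicate point, but you never actually close it: ``routed so as not to enter $F$'' and ``circumvented along $\partial\Gamma$'s existing edges'' are the right intuitions, yet they are left as possibilities rather than a concrete construction, and you yourself note that charging the $O(1/\delta)$ potential triangle-faces outright would blow the budget.

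The paper resolves all of this with a single uniform move that you hint at but do not commit to: replace \emph{every} face edge (not just the ones over triangle-faces) by a path $R(v_p,v_{p'})$ along the boundary of the face it crosses, choosing the side so as to preserve the $2/3$ balance using the $\le w(\T)/3$ face-weight bound from Proposition~\ref{prop:graph}. This simultaneously (a) makes the cut a closed walk on the edges of $G$, so all corners are automatically vertices of $G$ and hence basic DP-points (no need for the ad hoc rerouting you propose); (b) eliminates the triangle-face problem entirely, since the walk runs along $\partial T$ rather than through the open set $T$; and (c) yields the cheapness bound directly from the ``boundary of each face touches triangles of weight at most $3\delta^2 w(\T)$'' clause of Proposition~\ref{prop:graph}, summed over the $O(1/\delta)$ replaced face edges. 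A secondary inaccuracy in your case (ii): you assert that any triangle a face edge crosses must touch the boundary of $F$, but a small triangle can lie strictly in the interior of a non-triangle face; if you kept the face-crossing segments you would need the face-touching bound of Lemma~\ref{lem:weight-of-faces} rather than the boundary-touching bound of Proposition~\ref{prop:graph}. The paper's boundary-path replacement sidesteps this distinction as well. In short, the argument is salvageable, but the one idea you flagged as needing care is precisely the one the paper supplies and your proposal omits.
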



\begin{thm}\label{thm:main}
Let $K\in\mathbb{N}$. There exists a $(1+\eps)$-approximation
algorithm with 
\aad{a} 
running time of at most 
$(nK)^{\left(\frac{1}{\eps}K\cdot\log n \right)^{O(1)}}$
for the 
\aad{Maximum Weight Independent Set of Polygons} 
problem, assuming that each polygon has at most $K$ vertices. 
Hence, for the case that  $K\le (\log n)^{O(1)}$ there is a QPTAS for the 
\aad{Maximum Weight Independent Set of Polygons} 
problem.
\end{thm}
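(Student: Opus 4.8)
The plan is to prove Theorem~\ref{thm:main} by assembling the three main ingredients already established: Lemma~\ref{lem:good-cut-suffices}, Lemma~\ref{lem:reduction-to-triangles}, and Lemma~\ref{lem:cheap-cut}. First I would observe that Lemma~\ref{lem:cheap-cut} tells us that for any set of pairwise non-touching triangles $\T$ with $w(T)<w(\T)/3$ for each $T$, there is a balanced $O(\delta)$-cheap $(1/\delta)^{O(1)}$-cut with corners at basic DP-points. In the language of Lemma~\ref{lem:reduction-to-triangles}, this means the hypothesis of that lemma holds with $f(\delta)=(1/\delta)^{O(1)}$. Applying Lemma~\ref{lem:reduction-to-triangles} then yields, for any set of pairwise non-touching $K$-gons $\bar\P$ with $\max_{P}w(P)<w(\bar\P)/3$, a balanced $K\cdot\delta$-cheap $K\cdot f(\delta)$-cut with vertices at basic DP-points.

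Next I would plug this into Lemma~\ref{lem:good-cut-suffices}. Set $\alpha:=\eps/\log(n/\eps)$ as required there, and choose $\delta:=\alpha/K=\Theta\bigl(\eps/(K\log(n/\eps))\bigr)$, so that $K\cdot\delta=\alpha$. Then the cut guaranteed above is a balanced $\alpha$-cheap $\ell$-cut with $\ell=K\cdot f(\delta)=K\cdot(1/\delta)^{O(1)}=\bigl(\tfrac{1}{\eps}K\log n\bigr)^{O(1)}$, with vertices at basic DP-points. The one point to double-check is the ``or there is a polygon $P\in\bar\P$ such that $w(P)\ge\tfrac13 w(\bar\P)$'' disjunction in the hypothesis of Lemma~\ref{lem:good-cut-suffices}: if no such heavy polygon exists then $\max_{P}w(P)<w(\bar\P)/3$ and the reduction above applies to produce the cut; if such a heavy polygon does exist we are in the other branch and there is nothing to prove. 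Since $\ell\ge 4$ holds for the relevant range of parameters, Lemma~\ref{lem:good-cut-suffices} now gives that GEO-DP is a $(1+O(\eps))$-approximation when parametrized by $k=O(\ell^2\log^2(n/\eps))=\bigl(\tfrac{1}{\eps}K\log n\bigr)^{O(1)}$.

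Finally I would read off the running time. By Proposition~\ref{prop:running-time-GEO-DP}, GEO-DP with parameter $k$ runs in time $(nK)^{O(k^2)}$, and substituting $k=\bigl(\tfrac{1}{\eps}K\log n\bigr)^{O(1)}$ gives a running time of $(nK)^{\left(\frac{1}{\eps}K\log n\right)^{O(1)}}$, as claimed. To conclude the QPTAS statement, specialize to $K\le(\log n)^{O(1)}$: then the exponent $\bigl(\tfrac{1}{\eps}K\log n\bigr)^{O(1)}$ is $\poly(\log n,1/\eps)$, and since $\log\bigl((nK)\bigr)=O(\log n)$, the total running time is $2^{\poly(\log n,1/\eps)}$, which is quasi-polynomial. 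A standard rescaling of $\eps$ (replace $\eps$ by $c\eps$ for a suitable constant $c$ to turn the $1+O(\eps)$ into $1+\eps$) completes the argument.

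I do not expect any real obstacle here: all the hard work — constructing the partition of the plane, bounding its complexity, controlling the weight touching each face, invoking the Arora et al.\ cycle separator, and handling the reduction from polygons to triangles — has already been done in Lemmas~\ref{lem:cheap-cut}, \ref{lem:reduction-to-triangles}, and \ref{lem:good-cut-suffices}. The only mild care needed is the bookkeeping of how the parameters $\delta$, $\alpha$, $\ell$, and $k$ compose, and verifying that the ``heavy polygon'' case in Lemma~\ref{lem:good-cut-suffices} meshes correctly with the $\max_{P}w(P)<w(\bar\P)/3$ precondition of Lemma~\ref{lem:reduction-to-triangles}; both are routine.
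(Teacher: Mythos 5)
Your proposal follows the same assembly as the paper's proof: invoke Lemma~\ref{lem:cheap-cut} to get a balanced $O(\delta)$-cheap $(1/\delta)^{O(1)}$-cut for triangles, lift it to $K$-gons via Lemma~\ref{lem:reduction-to-triangles}, feed the result into Lemma~\ref{lem:good-cut-suffices} with $\delta=\Theta\bigl(\eps/(K\log(n/\eps))\bigr)$, and read off the running time from Proposition~\ref{prop:running-time-GEO-DP}. The parameter bookkeeping, the treatment of the heavy-polygon disjunction, and the final rescaling of $\eps$ all match what the paper does (the last two implicitly), so this is correct and essentially identical to the paper's argument.
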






\newpage
  
\bibliographystyle{abbrv}
\bibliography{MWISP-citations}


\newpage

\appendix

\section{Position of Input Points}\label{apx:general-position}

We show that w.l.o.g.~we can assume that no three points
of the input are lying on a straight line (and, as a consequence, no two points of the input are coincident).

\begin{lem} Let $K,N\in\mathbb{N}$ and let $\P$ be a set of $n$
polygons on the plane with at most $K$ vertices each such that each
vertex has integer coordinates in $\{0,...,N\}$ and no two polygons
are identical. In polynomial time we can compute a set of polygons
$\P'$ with at most $K$ vertices each with integer coordinates in
$\{0,...,\poly(K,N,n)\}$ and a bijection $f:\P\rightarrow\P'$ such
that for any $P,P'\in\P$ the polygons $P$ and $P'$ touch
if and only if $f(P)$ and $f(P')$ touch. \end{lem}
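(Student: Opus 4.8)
The plan is to perturb the input by a tiny rational amount, then clear denominators to return to integers. Concretely, I would first scale everything up: replace each coordinate $c$ by $(K+1)\cdot n \cdot N^{3}\cdot c$ (or any sufficiently large polynomial factor $M$), so that the bounding box becomes $\{0,\dots,M\cdot N\}$ and there is room to move points by small integer amounts. Then I would perturb each vertex $p_{i,j}$ individually by an offset vector that depends polynomially on the indices, for instance $p_{i,j}\mapsto (M x_{i,j}+ \sigma(i,j),\; M y_{i,j}+ \tau(i,j))$ for carefully chosen integer functions $\sigma,\tau$ of magnitude at most $\poly(K,n,N)$. The point is that a generic choice of perturbation destroys all degenerate collinearities while preserving all the combinatorial (touching/non-touching) relations, and one can make this choice explicit and deterministic.

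The key steps, in order, would be: (1) fix a large enough polynomial blow-up factor $M$ so that a perturbation of size $o(M)$ cannot change which side of an edge a vertex lies on, nor whether two edges cross, for any pair of non-degenerate configurations; (2) choose the perturbation so that no three (perturbed) vertices are collinear --- e.g.\ place the $m$-th vertex (under some fixed enumeration of all $\le nK$ vertices) at a point of the moment curve $t\mapsto(t,t^{2})$ scaled down, since no three points of a moment curve are collinear, and verify the resulting integer offsets are bounded by $\poly(K,N,n)$; (3) argue preservation of the touching relation: since the original polygons are simple with integer coordinates in $\{0,\dots,N\}$, two polygons touch iff some pair of their (closed) edges intersects or one contains a vertex of the other, and each such incidence is witnessed by the sign of a bounded-degree polynomial (a $2\times2$ or $3\times3$ determinant) in the coordinates; if that determinant is nonzero it has absolute value $\ge 1$ before blow-up, hence $\ge M$ (or $M^{2}$) after, which dominates the perturbation, so its sign is unchanged; if it is zero, the blow-up keeps it zero and the perturbation must be chosen to keep the relevant incidences (the ones that actually occur, i.e.\ touching pairs stay touching) --- here one uses that a shared boundary point forces the combinatorics, and a careful but routine case analysis on how two simple polygons can meet handles it; (4) define $f$ as the induced bijection and check the coordinate bound.

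I expect the main obstacle to be step (3), specifically handling the \emph{degenerate} incidences that are \emph{supposed} to survive: a touching pair of polygons might touch along a shared segment or at a shared vertex, and an arbitrary perturbation would separate them. One must therefore perturb \emph{shared} features consistently --- i.e.\ the perturbation should really be thought of as acting on the arrangement of edges rather than independently on each polygon's copy of a point --- or, more cleanly, first argue that after the integer blow-up by $M$ one may assume the only degeneracies are collinearities of triples of vertices (all ``accidental'' crossings and containments having been made robust), and then show that a moment-curve-type perturbation of the vertices, applied \emph{after} symbolically recording which vertices coincide and which edges overlap, removes exactly the collinear triples while leaving the coincidence/overlap pattern intact. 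Wrapping this up requires care but no deep idea; the magnitude bookkeeping is routine once the blow-up factor is chosen generously.
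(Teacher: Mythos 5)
Your high-level plan — blow up the coordinates by a polynomial factor, then perturb each vertex by a bounded integer amount chosen so that all unwanted collinearities disappear, and argue combinatorial invariance via sign-invariance of bounded-height determinants — is the right general shape, and the pigeonhole/moment-curve idea for picking a perturbation is fine and close in spirit to what the paper does (the paper iterates over a $(Kn)^2\times(Kn)^2$ grid of candidate offsets and uses a counting argument). However, you have misidentified the main obstacle, and your proposed fix does not close the actual gap.

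The paper treats each polygon as an \emph{open} set and defines ``touch'' as having nonempty intersection of these open sets. So two polygons whose closures share a boundary segment or a vertex but whose interiors are disjoint do \emph{not} touch, and the dangerous degeneracy is exactly the opposite of the one you worry about: an arbitrary generic perturbation can turn such a non-touching pair into a touching one, because there is no positive gap between them to hide behind a ``perturbation $\ll$ minimum determinant'' bound. You write as if a touching pair ``along a shared segment or at a shared vertex'' must be kept touching, and then hand-wave that one should ``perturb the arrangement rather than each polygon's copy of a point''; this is solving the wrong problem, and the proposed fix is too vague to be checkable. The paper's key idea, which your proposal is missing, is to move every vertex a small amount \emph{towards the interior of its own polygon}, so that every polygon strictly shrinks. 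That one constraint makes the easy direction ($P\cap P'=\emptyset \Rightarrow f(P)\cap f(P')=\emptyset$) immediate, including in exactly the degenerate shared-boundary cases where generic perturbation fails. The paper then still has to prove the hard direction ($P\cap P'\ne\emptyset \Rightarrow f(P)\cap f(P')\ne\emptyset$), and it does so by combining a quantitative separation bound (distance from an integer point to an integer line is $0$ or $\ge 1/(2N)$, which plays the role of your ``determinant $\ge 1$'' observation) with a clean case analysis: either some edge of $P$ properly crosses an edge of $P'$ (robust to small moves), or every vertex of $P\cap P'$ is a vertex of $P$ or $P'$, in which case, by first doubling coordinates so edge midpoints are integral, one exhibits an integer point strictly interior to one polygon and on the boundary of the other, again giving a $\ge 1/(2N)$ margin. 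Your write-up has no analogue of either the ``shrink towards interior'' device or this midpoint trick, and without them the degeneracy handling in your step (3) does not go through.
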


\begin{proof} First we show that for any point $p$ with integral
coordinates in $\{0,...,N\}$ and any line $L$ going through two points with
integral coordinates in $\{0,...,N\}$ it holds that either $p$ lies on $L$ or the
distance between $p$ and $L$ is at least 
$c := \frac{1}{2\cdot N}$.

Let $L$ be a line going through two points $(x_{1},y_{1}),(x_{2},y_{2})$
with $x_{1},x_{2},y_{1},y_{2}\in\{0,...,N\}$ and let $p$ be a point
with integer coordinates such that $p$ does not lie on $L$. If $L$
is either vertical or horizontal, then the distance between $L$ and
$p$ is at least one. Otherwise, consider an arbitrary point $p'\in L$
such that at least one coordinate of $p'$ is integral. Then the other
coordinate of $p'$ is either of the form $x_{1}\pm h\cdot\left|\frac{x_{2}-x_{1}}{y_{2}-y_{1}}\right|$
or $y_{1}\pm h'\cdot\left|\frac{y_{2}-y_{1}}{x_{2}-x_{1}}\right|$
for $h,h'\in\mathbb{N}$. As $1\le|y_{2}-y_{1}|\le N$ and $1\le|x_{2}-x_{1}|\le N$,
this implies that the distance between $p$ and $L$ in the Manhattan
metric is at least $1/N$. In particular, for the point $\tilde{p}\in L$
that minimizes the distance between $p$ and $L$ in the Euclidean
metric, the distance between $\tilde{p}$ and $p$ in the Manhattan
metric is at least $1/N$ and thus their distance in the Euclidean
metric is at least 
$\frac{1}{2\cdot N}$.

\begin{figure}[t]
\begin{centering}
\includegraphics{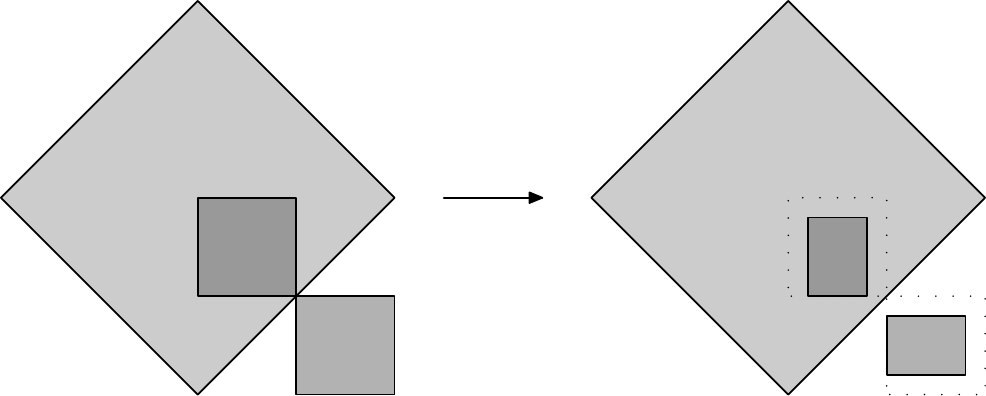} 
\par\end{centering}

\caption{\label{fig:no-3-on-a-line-1}Shifting vertices of the polygons
so that no three vertices are lying on one line. Some vertices are
moved by a small distance towards the interior of their polygons.
As a consequence of this operation, no two vertices of the polygons
are coincident. The combinatorial structure of the input remains the
same.}
\end{figure}

As we will show, if we move \emph{every} vertex $p$ of \emph{every
}polygon $P\in\P$ by a distance of at most $c':=c/5$ towards the
interior of the respective polygon $P$, we do not change the combinatorial
structure of $\P$ (see Figure~\ref{fig:no-3-on-a-line-1}). 
We can assume that all input vertices have even integer coordinates (otherwise, we could scale up all coordinates by a factor of two).
Observe that this implies that the mid-point of every edge has integer coordinates. 

As the polygons in $\P$ have at most $K\cdot n$ vertices in total,
there are at most $(K\cdot n)^{2}$ possible lines going through two
of these vertices. For each of those lines $L$ we want to ensure
that no third vertex lies on $L$. We achieve this as follows. We
consider the vertices of the polygons one by one. 
Let $p$ be a vertex
of a polygon $P$, and let $e,e'$ be the edges of $P$ adjacent to $p$.
We will move $p$ towards the interior of $P$.
Let $\vec{v}$ denote the vector representing $e$, i.e., $p+\vec{v}$
yields the coordinates of the other endpoint of $e$. Similarly, denote
by $\vec{v}'$ the vector representing $e'$. We move $p$ to the
point 
$p+(\ell\cdot\vec{v}+\ell'\cdot\vec{v}')\frac{c'}{10}\cdot\frac{1}{4N(Kn)^{2}}$
for some integers $\ell,\ell'\in\{0,...,(Kn)^{2}-1\}$ 
if the angle of the polygon at $p$ is smaller than $\pi$, and to the point 
$p-(\ell\cdot\vec{v}+\ell'\cdot\vec{v}')\frac{c'}{10}\cdot\frac{1}{4N(Kn)^{2}}$
if the angle at $p$ is greater than $\pi$ (see Figure~\ref{fig:no-3-on-a-line-2}).
In total, $p$ moves by a distance of at most $c'/5$ since 
$\Vert\vec{v}\Vert\le4N$ and $\Vert\vec{v}'\Vert\le4N$. We claim that we can find values
for $\ell$ and $\ell'$ such that after moving $p$ the resulting
point does not lie on a line $L$ that goes through two other vertices
$p_{1},p_{2}$ of polygons in $\P$.

\begin{figure}[t]
\begin{centering}
\includegraphics{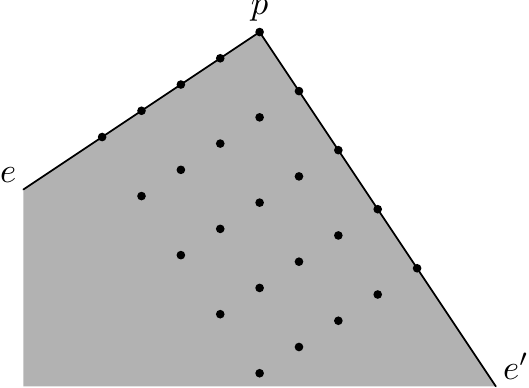}~~~~~~~~~~~~~\includegraphics[scale=.9]{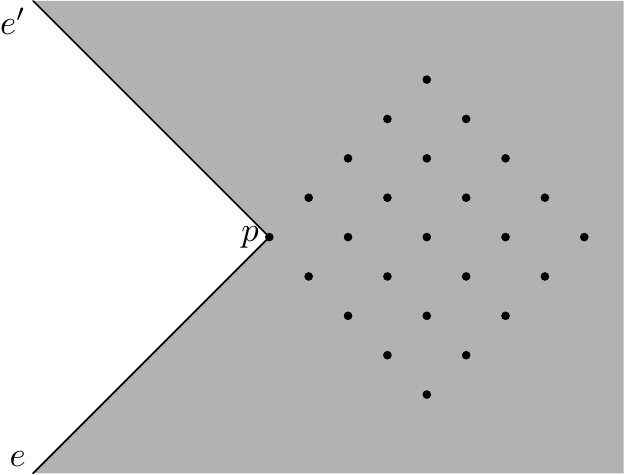}
\par\end{centering}

\caption{\label{fig:no-3-on-a-line-2}Possible positions where a vertex
$p$ of a polygon can be moved, when the angle at $p$ is smaller than $\pi$ (left) or greater than $\pi$ (right).}
\end{figure}

There are $(Kn)^{4}$ possible positions $p'$ where a point $p$ can be moved
(note that no
two different pairs $\{\ell_{1},\ell'_{1}\}$, $\{\ell_{2},\ell'_{2}\}$
yield the same point since the vectors corresponding to the edges
$e$ and $e'$ are linearly independent). Each pair $p_{1},p_{2}$
of two vertices of polygons in $\P$ with $p_{1}\ne p\ne p_{2}$ might
forbid us to use up to $(Kn)^{2}$ of these candidates and there are
at most $(Kn-1)\cdot(Kn-2)<(Kn)^{2}$ such pairs. Thus, by the pidgeonhole
principle, we can find a pair $\ell,\ell'$ such that $p$ does not
lie on any line going through two vertices $p_{1},p_{2}$ with $p_{1}\ne p\ne p_{2}$.
Performing this operation for every vertex $p$ of every polygon $P$ defines the map
$f$ (see the statement of the lemma).

We claim that after this modification two polygons $f(P),f(P')$ touch
if and only if $P$ and $P'$ touch. We moved all vertices of the
polygons towards the interior of the respective polygon. Thus, $P\cap P'=\emptyset$
implies that $f(P)\cap f(P')=\emptyset$.

Now suppose that $P\cap P'\ne\emptyset$. We want to show that
then $f(P)\cap f(P')\ne\emptyset$. We distinguish two cases. The
first case is that there is an edge $e$ of $P$ and an edge $e'$
of $P'$ such that $e$ and $e'$ cross each other. Note that then
the point $p:=e\cap e'$ is not an endpoint of either of the two.
By the above reasoning, the distance between $p$ and any of the four
endpoints of $e$ and $e'$ is at least $c$ (as the distance between
any vertex and any line going through two vertices of a polygon is
either zero or at least $c$). Since we moved each point by a distance
of at most $c'=c/5$, after moving the points $e$ and $e'$ still
cross each other and thus $f(P)\cap f(P')\ne\emptyset$. Assume now
that there are no such two edges. Then every vertex of the polygon
$P'':=P\cap P'$ coincides with a vertex of $P$ or a vertex of $P'$, 
see Figure \ref{fig:intersecting-polygons}.
\begin{figure}[t]
\begin{centering}
\includegraphics[scale=1]{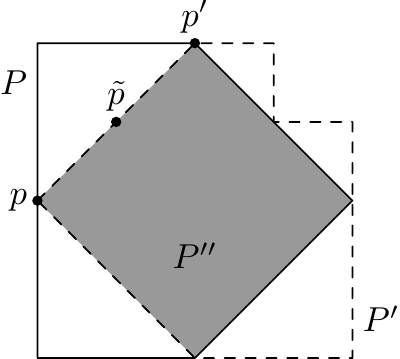}
\par\end{centering}
\caption{\label{fig:intersecting-polygons}If two polygons $P$ (solid boundary) and $P'$ (dashed boundary) touch, and there are no two edges $e$ and $e'$ of $P$ and $P'$, respectively, which cross each other, there must be a point $\tilde{p}$ on the boundary of one of the polygons 
\aad{(in this case $P'$), which lies in the interior of the other polygon
($P$) and which has integer coordinates.
Hence, it has a distance of
at least $c=\frac{1}{2N}$ to the boundary of the 
second polygon (P).}}
\end{figure}

As $P\ne P'$, there must be two points $p,p'$ that are vertices of $P$ or $P'$ such that
$L[p,p'] \in \partial P''$, 
the line segment $L[p,p']$ does not contain any vertex from $P$ or $P'$ apart from $p$ and $p'$, and $L[p,p']$ 
is not contained in any edge of $P$ or it is not contained in any edge of $P'$.
Suppose
w.l.o.g.~that $L[p,p']$ is not contained in any edge of $P$. 
Then the midpoint $\tilde{p}$
of $L[p,p']$  has integral coordinates and is contained in $P$.
This implies that $\tilde{p}$ has a distance of at least $c$ to
$\partial P$ and note that $\tilde{p}\in\partial P'$. Therefore,
after moving the polygons, there is a point on $\partial f(P')$ which
is contained in $f(P)$. Thus, $f(P)\cap f(P')\neq\emptyset$.

Initially, all vertices have integer coordinates. Then, each vertex
$p$ is moved to some point 
$p\aad{\pm}(\ell\cdot\vec{v} + \ell'\cdot\vec{v}')\frac{c'}{10}\cdot\frac{1}{4N(Kn)^{2}}$
for integers $\ell,\ell'\in\{0,...,(Kn)^{2}-1\}$. The resulting coordinates
are of the form $\tilde{\ell}\cdot\frac{1}{h(K,N,n)}$ for an integer
$\tilde{\ell}\in\{0,...,N\cdot h(K,N,n)\}$ for a polynomial $h$
that is independent of the problem instance. By scaling all coordinates up linearly
by a factor of $h(K,N,n)$, we obtain a set of polygons $\P'$ with integer
coordinates in $\{0,...,N\cdot h(K,N,n)\}$ that fulfill the claim
of the lemma.

The above operation can be performed efficiently. When moving each
vertex it suffices to identify the correct values for $\ell$ and $\ell'$,
and there is only a polynomial number of possibilities. 
\end{proof}

\section{Proof of Lemma~\ref{lem:good-cut-suffices}}

In order to prove Lemma~\ref{lem:good-cut-suffices} we show that
there is a family of polygons with corners at DP-points which
yield a recursion of GEO-DP with the claimed approximation
factor. In fact, the following two lemmas were almost identically
given in~\cite{AW2013}.

Note that, in the following, the polygons $\Gamma$ and $Q$ do not have to be simple, i.e., they can have holes.

\begin{lem} \label{lem:polygons-complexity} Let $\Gamma_{1},\ldots,\Gamma_{m}$
be a collection of polygons, where each $\Gamma_{i}$ has at most $\bar{\ell}$
edges. Then the intersection $\bigcap_{i=1,\ldots,m}\Gamma_{i}$ consists
of at most $(m\bar{\ell})^{2}$ connected components, and each connected
component is a polygon with at most $(m\bar{\ell})^{2}$ edges.
\end{lem}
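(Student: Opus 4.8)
The plan is to proceed by induction on $m$, the number of polygons in the collection. The key observation is that each $\Gamma_i$ can be written as a region bounded by at most $\bar\ell$ straight line segments, so the boundary of $\Gamma_i$ lies on a set $H_i$ of at most $\bar\ell$ full lines. When we intersect the $\Gamma_i$, the boundary of the intersection lies on the union $\bigcup_i H_i$, which consists of at most $m\bar\ell$ lines. These lines subdivide the plane into an arrangement with $O((m\bar\ell)^2)$ faces, edges and vertices, and $\bigcap_i \Gamma_i$ is a union of some of the closed faces of this arrangement (together with parts of their boundaries). From this the bound on the number of connected components will follow, and the bound on the number of edges of each component will follow by charging each edge of a component to a distinct cell-edge of the arrangement lying on its boundary.

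First I would make the reduction to line arrangements precise: fix, for each $i$, the set $H_i$ of lines carrying the edges of $\Gamma_i$, set $\mathcal H := \bigcup_{i=1}^m H_i$ so that $|\mathcal H|\le m\bar\ell$, and recall the standard fact that an arrangement of $t$ lines in the plane has at most $\binom{t}{2}+t+1 = O(t^2)$ cells (two-dimensional faces). Since membership in $\Gamma_i$ is constant on each open cell of the arrangement of $H_i$, it is a fortiori constant on each open cell of the finer arrangement of $\mathcal H$; hence $\bigcap_i \Gamma_i$, restricted to the complement of the lines, is exactly a union of open cells of the $\mathcal H$-arrangement. Therefore the number of connected components of $\bigcap_i \Gamma_i$ is at most the number of cells, which is $O((m\bar\ell)^2)$, giving the first claim (possibly after noting that adding in the boundary segments only merges cells, never increases the component count).

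For the second claim — each connected component has at most $(m\bar\ell)^2$ edges — I would argue as follows. A connected component $C$ of $\bigcap_i \Gamma_i$ is a union of closed cells of the arrangement; its boundary $\partial C$ is a subset of the arrangement's edges, and an "edge" of the polygon $C$ is a maximal straight portion of $\partial C$, hence lies on one of the $\le m\bar\ell$ lines of $\mathcal H$ and is a union of consecutive arrangement-edges along that line. Each polygon-edge of $C$ can be injectively charged to, say, its first arrangement-edge; since distinct polygon-edges of $C$ use disjoint sets of arrangement-edges, and the total number of arrangement-edges is $O((m\bar\ell)^2)$, the number of edges of $C$ is $O((m\bar\ell)^2)$. (If one wants the literal constant in $(m\bar\ell)^2$ one can be slightly more careful: a convex-position/Euler-formula count of the arrangement of $t$ lines gives exactly $t^2$ edges when no two are parallel and no three concurrent, and degenerate configurations only decrease this, so $t^2 = (m\bar\ell)^2$ is a valid bound; I would invoke this rather than recompute it.)

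The main obstacle I anticipate is purely bookkeeping rather than conceptual: handling the degenerate and boundary cases cleanly — parallel or coincident lines among the $H_i$, three or more lines through a common point, edges of $C$ that are collinear but separated by a "pinch" where the component touches itself, and the fact that the $\Gamma_i$ are closed (so boundary segments must be accounted for in both the component-count and the edge-count). The cleanest way around this is to do all the counting on the open arrangement first (where everything is a disjoint union of open cells and the bounds are transparent), and then argue that passing to the closure can only decrease the number of connected components and can only be handled by the same line-charging argument for edges; I would not belabor these cases in the write-up beyond a sentence each, since the asymptotic bound $O((m\bar\ell)^2)$ — which is all that is used later — is robust to them, and the paper only needs "at most $(m\bar\ell)^2$" which the non-degenerate count already delivers as the worst case.
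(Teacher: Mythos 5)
Your proof is correct and reaches the same bound, but the packaging is different from the paper's. The paper argues directly about the polygon edges as segments: there are at most $m\bar\ell$ edges in total, any two cross at most once, so there are at most $(m\bar\ell)^2$ crossing pairs; every corner of a connected component of $\bigcap_i\Gamma_i$ sits at a distinct crossing pair, and since each component has at least three corners this bounds both the number of components and the number of edges per component by $(m\bar\ell)^2$. You instead extend each edge to a full line, pass to the arrangement of the $\le m\bar\ell$ resulting lines, and invoke the standard cell/edge counts for a line arrangement of $t$ lines. The two routes produce the same asymptotics. What the paper's count buys is self-containment: it never leaves the finite edge set and needs no external result. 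What your arrangement formulation buys is robustness in exactly the places where the paper's proof is informal — the arrangement automatically accounts for corners that are original vertices of some $\Gamma_i$ rather than intersections of two edges from different polygons, handles concurrencies and parallels (they only decrease the cell count), and cleanly separates the open-cell combinatorics from the closure bookkeeping at the end; for these the paper's terse "each corner corresponds to a different crossing pair" would need a sentence or two of justification. Either version is fine for the way the lemma is used downstream, where only the $O((m\bar\ell)^2)$ bound matters.
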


\begin{proof} As the polygons $\Gamma_{i}$ have at most $m\bar{\ell}$
edges in total, and any two edges can cross at most once, we have
at most $(m\bar{\ell})^{2}$ pairs of crossing edges. Each connected
component of the intersection $\bigcap_{i=1,\ldots,m}\Gamma_{i}$ has at
least three corners, and each corner corresponds to a different pair
of crossing edges. Hence, there are at most $(m\bar{\ell})^{2}$ connected
components. The number of edges of one connected component $\Gamma$ equals
the number of vertices of $\Gamma$, which is also upper bounded by the
number of crossing pairs $(m\bar{\ell})^{2}$. \end{proof}

\begin{lem} \label{lem:families-Pj}Let $\alpha,\ell$ be as defined
in Lemma \ref{lem:good-cut-suffices} and let $\bar{\P}$ be a set
of at most $n$ pairwise non-touching polygons. Let $j^{*}=\left\lceil \log_{3/2}{n^{2}}/{\eps}\right\rceil $.
Then for each $j\in\left\{ 0,1,\ldots,j^{*}\right\} $ there is a
family of polygons $\Q_{j}$, such that:

\begin{enumerate}[a)] 
\item each polygon $Q\in\Q_{j}$ has at most $(j+1)^{2}\cdot(\ell+4)^{2}$
edges and all its vertices are DP-points,
\item $\Q_{0}=\{[0,N]\times[0,N]\}$, 
\item the polygons in $\Q_{j}$ are disjoint, and each polygon $Q\in\Q_{j-1}$
is a disjoint union of at most $(j+1)^{2}(\ell+4)^{2}$ polygons from
$\Q_{j-1}$, 
\item each polygon $Q\in\Q_{j^{*}}$ contains at most one polygon from $\bar{\P}$, 
\item for each set $\Q_{j}$ we get $\sum_{Q\in\Q_{j}}w(Q)\ge\left(1-\alpha\right)^{j}\cdot w(\bar{\P})$, where $w(Q)= \sum_{P \in \bar{\P}: P \subseteq Q} w(P)$. 
\end{enumerate}
\end{lem}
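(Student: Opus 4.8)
The plan is to construct the families $\Q_j$ inductively, mimicking the recursion described in the proof sketch of Lemma~\ref{lem:good-cut-suffices}, and then to verify properties (a)--(e) along the way. First I would set $\Q_0 := \{[0,N]\times[0,N]\}$, which gives (b), and note that (a), (c), (e) hold trivially for $j=0$. For the inductive step, assume $\Q_{j-1}$ has been constructed. For each polygon $Q \in \Q_{j-1}$, consider the set $\bar{\P}_Q := \{P \in \bar{\P} : P \subseteq Q\}$ of polygons it "owns"; if $|\bar{\P}_Q| \le 1$ we leave $Q$ untouched and put it into $\Q_j$. Otherwise, we split $Q$: if there is a polygon $P \in \bar{\P}_Q$ with $w(P) \ge \frac13 w(\bar{\P}_Q)$, we replace $Q$ by the two pieces $Q \setminus P$ and $P \cap Q$ (each with at most $4$ extra edges coming from $P$); otherwise, by the hypothesis of Lemma~\ref{lem:good-cut-suffices} applied to $\bar{\P}_Q$ there is a balanced $\alpha$-cheap $\ell$-cut $\Gamma$ whose vertices are basic DP-points, and we replace $Q$ by the connected components of $\Gamma \cap Q$ and of $Q \setminus \Gamma$.

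Next I would bound the complexity. By construction each $Q \in \Q_j$ is an intersection of at most $j+1$ polygons, each with at most $\ell+4$ edges (the original square and one cut or one "cut-out polygon" per level), so Lemma~\ref{lem:polygons-complexity} gives that every such intersection has at most $((j+1)(\ell+4))^2$ connected components, each a polygon with at most $((j+1)(\ell+4))^2$ edges. This yields (a) and the component-count bound needed for (c); disjointness in (c) is immediate since at each level we partition each piece into disjoint subpieces. All vertices are DP-points: the vertices of the square and of the cuts $\Gamma$ are basic DP-points, the vertices of the cut-out polygons $P$ are corners of input polygons hence basic DP-points, and any further intersection points created along the way are, by definition, additional DP-points.

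For (e), the weight accounting is the crux and I would argue as follows. Define $w(Q) = \sum_{P \in \bar{\P} : P \subseteq Q} w(P)$. When we do not split $Q$, or when we split off a heavy polygon $P$, no weight is lost: the heavy polygon $P$ itself becomes its own piece and $Q \setminus P$ keeps all other owned polygons. When we apply a balanced $\alpha$-cheap cut $\Gamma$ to $Q$, the only owned polygons of $Q$ that fail to be contained in a single resulting piece are those intersecting $\partial\Gamma$, and by the $\alpha$-cheapness these have total weight at most $\alpha \cdot w(\bar{\P}_Q) = \alpha \cdot w(Q)$. Summing over all $Q \in \Q_{j-1}$ and using that the sets $\bar{\P}_Q$ are disjoint, we lose at most an $\alpha$-fraction of $\sum_{Q\in\Q_{j-1}} w(Q)$ in passing to $\Q_j$, so $\sum_{Q\in\Q_j} w(Q) \ge (1-\alpha)\sum_{Q\in\Q_{j-1}} w(Q) \ge (1-\alpha)^j w(\bar{\P})$ by induction, giving (e).

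Finally, for (d) I would track the maximum owned weight. Each splitting step---whether cutting out a heavy polygon or applying a balanced cut---reduces the owned weight of every resulting nonempty piece to at most $\frac23$ of the owned weight of its parent: for the heavy-polygon case both $w(P \cap Q)=w(P) \le w(Q)$ trivially and $w(Q\setminus P) \le \frac23 w(Q)$ since $w(P)\ge\frac13 w(Q)$, while for the balanced-cut case the defining inequalities $w(\P_{\mathrm{in}}), w(\P_{\mathrm{out}}) \le \frac23 w(\bar{\P}_Q)$ do exactly this (and we may restrict attention to pieces owning at least two polygons, treating singletons as final). Since $w(\bar{\P}) \le n \cdot (n/\eps) = n^2/\eps$ by Lemma~\ref{lem:bound-weights} and each owned polygon has weight at least $1$, after $j^* = \lceil \log_{3/2}(n^2/\eps) \rceil$ levels every piece owns weight less than $1$, i.e.\ owns no polygon, or owns exactly one; in either case each $Q \in \Q_{j^*}$ contains at most one polygon from $\bar{\P}$, which is (d). The main obstacle I anticipate is the bookkeeping that keeps the edge count and the component count at each level simultaneously bounded by $((j+1)(\ell+4))^2$ while respecting that the cuts $\Gamma$ themselves already have up to $\ell$ edges and may interact with the boundary of $Q$; this is handled cleanly by always viewing $Q$ as an intersection of at most $j+1$ bounded-complexity polygons and invoking Lemma~\ref{lem:polygons-complexity} once, rather than estimating the growth step by step.
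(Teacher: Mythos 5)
Your proposal follows the same construction and verification as the paper's proof: the same inductive definition of $\Q_j$ (pass singletons through, split off a heavy polygon, or apply the balanced cut), the same use of Lemma~\ref{lem:polygons-complexity} by viewing each piece as a connected component of an intersection of at most $j+1$ polygons with at most $\ell+4$ edges to get (a) and (c), the same $(1-\alpha)$-per-level accounting for (e), and the same $(2/3)^j$ decay of owned weight combined with $w(\bar{\P}) \le n^2/\eps$ and unit lower bound on polygon weights for (d). This is essentially the paper's argument.
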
 

\begin{proof} We set $\Q_{0}=\{[0,N]\times[0,N]\}$, i.e., $\Q_{0}$
consists of one square which contains all the polygons from $\bar{\P}$.
We then construct the sets $\Q_{1},\ldots,\Q_{j^{*}}$ one by one,
as follows. To construct $\Q_{j}$, we consider each polygon $Q\in\Q_{j-1}$,
and we add to $\Q_{j}$ the following set of polygons, which together
give a disjoint union of $Q$. If $Q$ contains at most one polygon
$P\in\bar{\P}$, we add $Q$ to the set $\Q_{j}$. Otherwise, if there
is a polygon $P_{0}\in\bar{\P}$, $P_{0}\subseteq Q$ with $w(P_{0})\ge\ 1/3\cdot w(Q)$,
we add to $\Q_{j}$ the following polygons: $Q\cap P_{0}=P_{0}$,
and the connected components of $Q\cap\overline{P_{0}}$, i.e. the
connected components of $Q\setminus P_{0}$. Finally, consider the
case that no polygon $P_{0}\in\bar{\P}$ with $P_{0}\subseteq Q$
has weight $w(P_{0})\ge\ 1/3\cdot w(Q)$. Then there exists a balanced $\alpha$-cheap
$\ell$-cut for the set of polygons from $\bar{\P}$ which are contained
in $Q$. Let $\Gamma$ be the polygon defining this cut, and let $\bar{\Gamma}$
be its complement intersected with the input square. Due to the assumptions
in Lemma~\ref{lem:good-cut-suffices}, all corners
of $\Gamma$ are basic DP-points. We add to $\Q_{j}$ all connected
components of $Q\cap\Gamma$ and $Q\cap\bar{\Gamma}$. Notice that
$\Gamma$ has at most $\ell$ edges, and so $\bar{\Gamma}$ has at
most $\ell+4$ edges.

We now have to check that all required properties are satisfied.

$a)$ The only polygon in $\Q_{0}$ has $4$ edges. Each polygon $Q\in\Q_{j}$
for $j\ge1$, is a connected component of an intersection of at most
$j+1$ polygons with at most $\ell+4$ edges each whose corners are basic DP-points. Hence, the vertices of
$Q$ are all DP-points and, due to Lemma
\ref{lem:polygons-complexity}, $Q$ has at most $(j+1)^{2}(\ell+4)^{2}$
edges.

$b)$ Defined at the beginning of the proof.

$c)$ From the construction of the sets $\Q_{j}$ it can be easily
observed that the polygons in $\Q_{j}$ are disjoint, and each polygon
$Q\in\Q_{j-1}$ is a union of polygons from $\Q_{j}$. We now have
to upper bound the number of polygons from $\Q_{j}$ which can be
contained in one polygon $Q\in\Q_{j-1}$. Polygon $Q$ is a connected
component of an intersection of at most $j$ polygons, each with at
most $\ell+4$ edges. By construction each polygon $Q'\in\Q_{j}$
is contained in some polygon $Q\in\Q_{j-1}$. Each polygon $Q'\in\Q_{j}$
contained in $Q$ is a connected component of an intersection of $Q$
with a polygon $\Gamma$ with at most $\ell$ edges, or with a polygon
$\bar{\Gamma}$ with at most $\ell+4$ edges. Therefore $Q'$ is a
connected component of an intersection of at most $j+1$ polygons,
each with at most $\ell+4$ edges.
From Lemma \ref{lem:polygons-complexity}
the number of such components is upper bounded by $(j+1)^{2}(\ell+4)^{2}$.

$d)$ For a polygon $Q$ let $w(Q):=\sum_{P\in\bar{\P}:P\subseteq Q}w(P)$
denote the weight of all polygons from $\bar{\P}$ contained in $Q$.
We will show by induction that if $Q\in\Q_{j}$ contains more than
one polygon from $\bar{\P}$, then $w(Q)\le\frac{2}{3}^{j}w(\bar{\P})\le\frac{2}{3}^{j}n^{2}/\eps$.
For $Q\in\Q_{j^{*}}$ that value is at most $1$ since $\frac{2}{3}^{j^{*}}n^{2}/\eps\le1$.
As each polygon from $\bar{\P}$ has weight at least $1$, $Q$
cannot contain more than one polygon.

For the (unique) polygon $Q\in\Q_{0}$ we have $w(Q)=w(\bar{\P})\le n^{2}/\eps$
(see Lemma \ref{lem:bound-weights}).
We assume by induction that the property holds for $\Q_{j-1}$, and
we will show that it holds also for $\Q_{j}$. Let $Q\in\Q_{j}$ be
contained in a polygon $Q_{0}\in\Q_{j-1}$, where $w(Q_{0})\le\frac{2}{3}^{j-1}w(\bar{\P})\le\frac{2}{3}^{j-1}n^{2}/\eps$.
If $Q$ contains more than one polygon from $\bar{\P}$, then either
$Q\subseteq Q_{0}\setminus P'$ for a polygon $P'$ with $w(P')\ge\frac{1}{3}\cdot w(Q_{0})$,
or $Q$ is obtained from $Q_{0}$ by a balanced cut. In both cases
we get $w(Q)\le\frac{2}{3} w(Q_{0}) \le \frac{2}{3}^j w(\bar{\P})$.

$e)$ The property holds for $\Q_{0}$, as $\sum_{Q\in\Q_{0}}w(Q)=w(\bar{\P})$.
We give a proof by induction. Assume that the property holds for $\Q_{j-1}$,
i.e., $\sum_{Q\in\Q_{j-1}}w(Q)\ge\left(1-\alpha\right)^{j-1}\cdot w(\bar{\P})$.
The polygons which are intersected by $\Q_{j}$, but not by $\Q_{j-1}$,
must be intersected by the newly introduced cuts $\Gamma$, which
intersect polygons $Q\in\Q_{j-1}$. As each cut $\Gamma$ is a $\alpha$-cheap
$\ell$-cut for the set of polygons contained in the corresponding
polygon $Q$, we get $\sum_{Q'\in\Q_{j}:Q'\subseteq Q}w(Q')\ge(1-\alpha)w(Q)$
for each $Q\in\Q_{j-1}$, and so $\sum_{Q\in\Q_{j}}w(Q)\ge\left(1-\alpha\right)^{j}\cdot w(\bar{\P})$.
\end{proof} 

With this preparation we are able to prove Lemma~\ref{lem:good-cut-suffices}. 

\begin{proof}[Proof of Lemma~\ref{lem:good-cut-suffices}]
We run GEO-DP, parametrized by $k:=\left(\left\lceil \log_{3/2}({n^{2}}/{\eps})\right\rceil +1\right)^{2}\cdot(\ell+4)^{2}$,  on $\bar{\P}$.
Denote by $\Q_{j}$, with $j\in\left\{ 0,1,\ldots,j^{*}\right\} $
for $j^{*}=\left\lceil \log_{3/2}({n^{2}}/{\eps})\right\rceil $,
the families of polygons with DP-points as coordinates as given by
Lemma~\ref{lem:families-Pj}.

From Lemma~\ref{lem:families-Pj}a) any polygon $Q\in\Q_{j}$ has
at most $k$ edges, all its vertices are DP-points,
and so GEO-DP has a DP-cell for $Q$. If $Q\in\Q_{j^{*}}$,
from Lemma~\ref{lem:families-Pj}d) we know that $Q$ contains at
most one polygon from $\bar{\P}$, and so $w(sol(Q))=w(Q)$, where for each polygon
$Q$ we denote by $w(Q)$ the total weight of all polygons from $\bar{\P}$
which are contained in $Q$. From Lemma \ref{lem:families-Pj}c) each
polygon $Q\in\Q_{j}$ is a union of at most $k$ polygons $Q_{1},\ldots,Q_{m}\in\Q_{j+1}$.
Therefore GEO-DP tries the subdivision of $Q$ into these components
and we get that $w(sol(Q))\ge\sum_{i=1}^{m}w(sol(Q_{i}))$, which
for the input polygon $Q_{0}\in\Q_{0}$ (see Lemma~\ref{lem:families-Pj}b))
gives 
\[
w(sol(Q_{0}))\ge\sum_{Q\in\Q_{j^{*}}}w(sol(Q))=\sum_{Q\in\Q_{j^{*}}}w(Q)\ge\left(1-\alpha\right)^{j^{*}}\cdot w(\bar{\P}),
\]
where the last inequality comes from Lemma~\ref{lem:families-Pj}e). The approximation ratio of GEO-DP is then $\left(1-\alpha\right)^{-j^{*}} = \left(\frac{1}{1-\alpha}\right)^{\left\lceil \log_{3/2}({n^{2}}/{\eps})\right\rceil} = (1+\alpha)^{O(\log(n / \eps))} = 1+O(\eps)$.
\end{proof}

\section{Other Omitted Proofs and Figures}\label{apx:proofs-figures}

\begin{proof}[Proof of Lemma~\ref{lem:bound-weights}]
Suppose we are given a set of polygons $\P$ with arbitrary weights.
First, we scale the weights of all 
\aad{polygons}
such that $\max_{P\in\P}w(P)=n/\eps$.
Since then $OPT\ge n/\eps$, all polygons $P' \in \P$ with
$w(P')<1$ can contribute a total weight of at most $n\cdot1=\eps\cdot\frac{n}{\eps}\le\eps\cdot OPT$.
We remove them from the instance which reduces the value of the optimal
solution by at most $\eps\cdot OPT$. Thus, the value of the optimal solution
decreases at most by a factor of $(1-\eps)^{-1}=1+O(\eps)$.
\end{proof}

\begin{proof}[Proof of Lemma~\ref{lem:partition-stripes}]
As the lines in $\E_{i}$ do not cross one another,
we can order them from top to bottom. We construct the set of lines
$\bar{\E}_{i}$ one by one. If $\E_{i}=\emptyset$ or if the total
weight of triangles touching $S_{i}$ is at most $\delta^{4}\cdot w(\T)$,
we set $\bar{\E}_{i}=\emptyset$ and we are done.

Let $e_{1}\in\E_{i}$ be the bottom-most line from $\E_{i}$ such
that the total weight of triangles touching $S_{i}$ above $e_{1}$
is at most $\delta^{4}\cdot w(\T)$. If such a line does not exist,
we set $e_{1}\in\E_{i}$ to be the top-most line crossing $S_{i}$.
Iteratively, we define $e_{j}\in\E_{i}$ be the bottom-most line from
$\E_{i}$ which is below $e_{j-1}$ and such that the total weight
of triangles touching $S_{i}$ between $e_{j-1}$ and $e_{j}$ is
at most $\delta^{4}\cdot w(\T)$. Again, if such a line does not exist,
we set $e_{j}\in\E_{i}$ to be the top-most line crossing $S_{i}$
below $e_{j-1}$. We stop when we have added to $\bar{\E}_{i}$ a
line $e_{j^{*}}\in\E_{i}$ such that the total weight of triangles
touching $S_{i}$ below $e_{j^{*}}$ is at most $\delta^{4}\cdot w(\T)$,
or when $e_{j^{*}}$ is the bottom-most line in $\E_{i}$.

Let $C$ be a connected component of $S_{i}\setminus\bigcup\bar{\E}_{i}$
such that $C\cap\bigcup\E_{i}\neq\emptyset$. Then from the construction
above the total weight of triangles touching $C$ is at most $\delta^{4}\cdot w(\T)$.

Let $e_{j},e_{j+1},e_{j+2}$ be three consecutive lines from $\bar{\E}_{i}$.
From the construction above we know that the total weight of triangles
touching $S_{i}$ between $e_{j}$ and $e_{j+2}$ is greater than
$\delta^{4}\cdot w(\T)$. As for each triangle $T\in\T$ the intersection
$T\cap S_{i}$ is contained between two consecutive lines from $\bar{\E}_{i}$
(or possibly above the first or below the last line from $\bar{\E}_{i}$),
we get that $|\bar{\E}_{i}|\le\frac{2}{\delta^{4}}$. 
\end{proof}

\begin{proof}[Proof of Proposition~\ref{prop:number-lines-L_ext}]
For each stripe $S_i$ we have $|\bar{\E}_i|\le O(1/\delta^4)$. Thus, the number of cells in 
\aad{a} 
stripe $S_i$ is at most $O(1/\delta^4)$. 
Since we have $O(1/\delta^2)$ stripes the total number of cells is upper bound by $O(1/\delta^6)$. 
For each dense cell we added two lines to $\Le$, so $O(1/\delta^6)$ in total.  
When connecting each of the $O(1/\delta^6)$ lines in $\L_0$ (cf. Proposition~\ref{prop:number-lines-L_0}) to the boundary of $I$ we added at most $O(1/\delta^2)$ lines per each line in $\L_{0}$. Thus,
\aad{$|\L_{0} \cup \Le| = O(1/\delta^8)$.}

Each endpoint of a line in $\Le$ created in the first step, i.e., while adding vertical boundary lines for dense cells, 
is a corner of a cell. 
Each endpoint of a line in $\Le$ created in the second step, i.e., while connecting lines from $\L_0$ with other lines, is either a vertex of an input triangle (this applies to the lines $L_1,L_2,\ldots$), or an intersection of an edge of a triangle 
with a vertical line $\{x_{i}\}\times[0,N]$ where the latter contains some vertex of an input triangle (this applies to the lines $L_0',L_1',\ldots$). By the definition of the DP-points each such point is a basic DP-point.
\end{proof}

\begin{proof}[Proof of Lemma~\ref{lem:weight-of-faces}]
Suppose that $F$ touches some dense cell $C \subseteq S_{i}$.
Due to the definition of the lines in $\L$ then either $F=T$ for
some triangle $T$ (when $T$ was cut out since it owned some cell,
not necessarily $C$) or $F\subseteq C$. If $F=T$, we are done. Now assume that $F\subseteq C$.
From the construction of the lines $\L_{0}\cup\Le$ and the definition
of a dense cell, $F$ is contained in the stripe $S_{i}$, $F$ is not contained in a single triangle $T \in \T$, and $F$
is not crossed by any line in $\E_{i}$. The two latter properties give us that each triangle
touching $F$ has at least one endpoint inside $F \subseteq S_{i}$. As $\wb(S_i) < \delta^2 w(\T)$, the
total weight of triangles touching $F$ is upper bounded by $3\delta^{2}\cdot w(\T)$.

Now assume that $F$ does not touch any dense cells and there is no
triangle $T$ such that $F=T$. We will show that for every stripe
$S_{i}$ the intersection $F\cap S_{i}$, if non-empty, is contained in one cell $C_i$. 
Let $C_{i}^{1}$ and $C_{i}^{2}$ be two cells of $S_i$. Let $L \in \L_{0}$ be a line crossing $S_i$ and separating $C_{i}^{1}$ from $C_{i}^{2}$ (see Figure~\ref{fig:C1-C2}). From the construction of the lines in  $\L_{0}\cup\Le$, there is a path of lines in $\L_{0}\cup\Le$ (and therefore also a path of lines in $\L$) containing $L$ and connecting the left or bottom boundary of $I$ with the top or right boundary of $I$. The cells $C_{i}^{1}$ and $C_{i}^{2}$ are separated by this path. Therefore any face $F$ touching $C_i^1$ does not touch $C_i^2$.

\begin{figure}[t]
\begin{centering}
\includegraphics[scale=0.8]{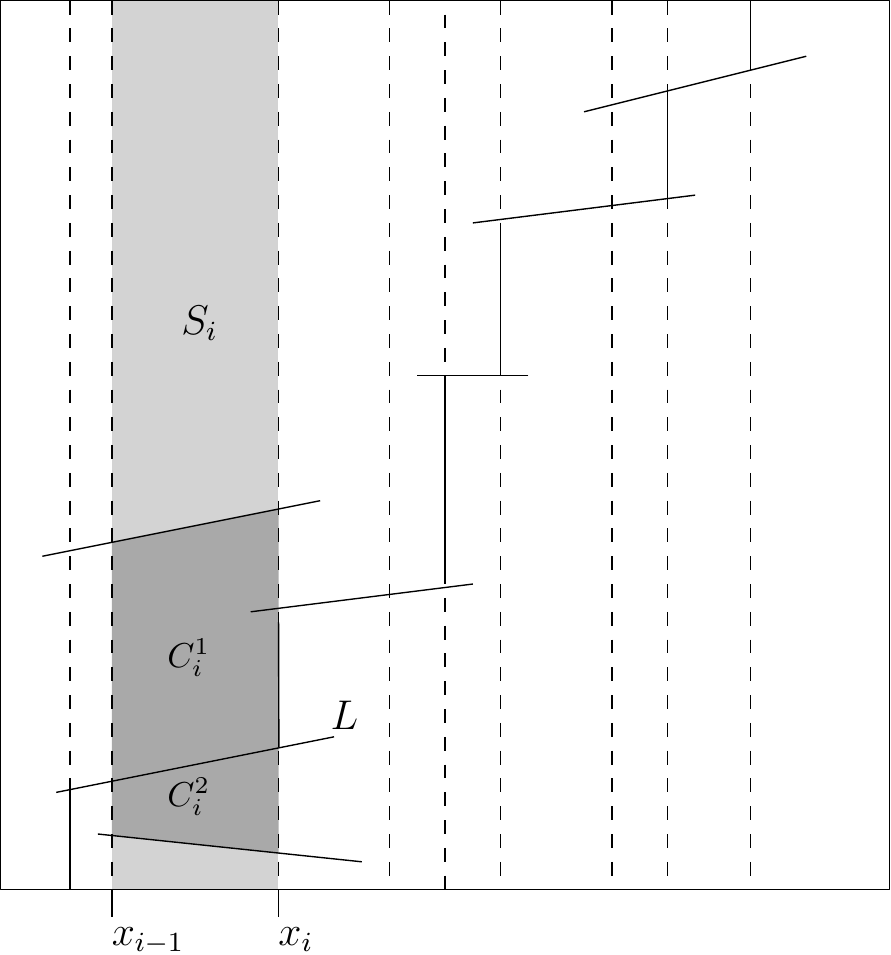}
\par\end{centering}
\caption{\label{fig:C1-C2}Two cells $C_i^1,C_i^2$ of the same stripe \aad{$S_i$} cannot lie in the same face of the partition.}
\end{figure}



As the
number of stripes is not greater than $1/\delta^{2}$ and from Lemma \ref{lem:partition-stripes}
for each light cell $C_{i}$ the total weight of triangles touching
$C_{i}$ is upper bounded by $\delta^{4}\cdot w(\T)$, the total weight
of triangles touching $F$ is upper bounded by $\delta^{2}\cdot w(\T)$.
\end{proof}

\begin{proof}[Proof of Proposition~\ref{prop:graph}] 
Due to Lemma~\ref{lem:bounded-crossing-number} each triangle $T \in \T$ is touched by at most four lines in $\L$. 

Thus, each triangle $T\in \T$ is touched by at most four edges of $G$ and hence the total cost of the edges $E$ is bounded by $O(w(\T))$.
The other claims follow directly from Lemmas~\ref{lem:weight-of-faces}
and \ref{lem:bounded-crossing-number} and Proposition \ref{prop:number-lines-L}, using that $\delta < 1/3$.
\end{proof}

\begin{proof}[Proof of Lemma~\ref{lem:cheap-cut}]
We apply Theorem~\ref{thm:cycle-separator} to the
graph $G=(V,E)$ obtained from the subdivision $\L$, with parameter $\bar{k}:=\frac{1}{\delta\cdot w(\T)}\sum_{e_{p,p'} \in E}c_{p,p'}$.
From Proposition \ref{prop:graph} we know that $\bar{k}=O(1/\delta)$, and that the obtained
V-cycle $C$ uses at most $O(1/\delta)$ face edges and
ordinary edges with total cost at most $O(\delta\cdot w(\T))$. 
We transform $C$ into a V-cycle $C'$ using only ordinary edges of $G$, by replacing each face edge connecting two vertices $v_p$ and $v_{p'}$ with a path $R(v_p,v_{p'})$ lying on the boundary of the crossed face. 
\aad{As due to Proposition~\ref{prop:graph} }each face has weight at most $w(\T)/3$, while performing the above operation we ensure that the modified cut $C'$ remains balanced (i.e., each side of the cut contains triangles of total weight at most $2/3 \cdot w(\T)$). We do that by choosing the path $R(v_p,v_{p'})$ in one of the two possible ways, depending on whether we want to connect the face with the interior, or with the exterior of $C'$. Notice that after this operation $C'$ might not be a simple cycle. However, we can modify $C'$ so that each edge appears only $O(1)$ times in $C'$. 
As from Proposition~\ref{prop:graph} the number of edges in $G$ is bounded by $(\frac{1}{\delta})^{O(1)}$, $C'$ uses at most $(\frac{1}{\delta})^{O(1)}$ edges.
All vertices of $C'$ are vertices of $G$, and so they are basic DP-points.

It remains to show is that $C'$ is $O(\delta)$-cheap. The ordinary edges of $C$ have cost $O(\delta\cdot w(\T))$. The remaining edges of $C'$ lie on the boundary of $O(1/\delta)$ faces of $G$, and from Proposition~\ref{prop:graph} they intersect triangles of total weight $O(\delta\cdot w(\T))$. Thus, the cycle $C'$ is $O(\delta)$-cheap.
\end{proof}

\begin{proof}[Proof of Theorem~\ref{thm:main}]
From Lemma \ref{lem:cheap-cut}, when setting $\delta:=\Theta(\frac{\epsilon}{K\cdot\log(n/\epsilon)})$, we obtain that for any set $\T$ of 
pairwise non-touching triangles in the plane such that $w(T)<w(\T)/3$ for each
$T\in\T$ there exists a balanced $\frac{\epsilon}{K\cdot\log(n/\epsilon)}$-cheap $(\frac{K\cdot\log(n/\epsilon)}{\epsilon})^{O(1)}$-cut
with corners at basic DP-points.
Applying Lemma \ref{lem:reduction-to-triangles} gives that for any set of pairwise non-touching
polygons $\P$ with at most $K$ edges each such that $\max_{P\in\P}w(P)<w(\P)/3$
there is a balanced $\frac{\epsilon}{\log(n/\epsilon)}$-cheap
$K\cdot (\frac{K\cdot\log(n/\epsilon)}{\epsilon})^{O(1)}$-cut
with vertices at basic DP-points.
From Lemma \ref{lem:good-cut-suffices}, GEO-DP parametrized by $k=(\frac{K\cdot\log(n/\epsilon)}{\epsilon})^{O(1)}$
has an approximation ratio of $1+O(\eps)$.
According to Proposition \ref{prop:running-time-GEO-DP}, the running time of GEO-DP is upper bounded by $(nK)^{O(k^{2})} = (nK)^{\left(\frac{1}{\eps}K\cdot\log(n/\eps)\right)^{O(1)}} \le (nK)^{\left(\frac{1}{\eps}K\cdot\log(n)\right)^{O(1)}}$.
If $K\le (\log n)^{O(1)}$ this yields a running time of $n^{(\frac{1}{\epsilon}\cdot \log n)^{O(1)}}$ which is quasipolynomial.
\end{proof}

\begin{figure}[h]
\begin{centering}
\includegraphics[scale=0.66]{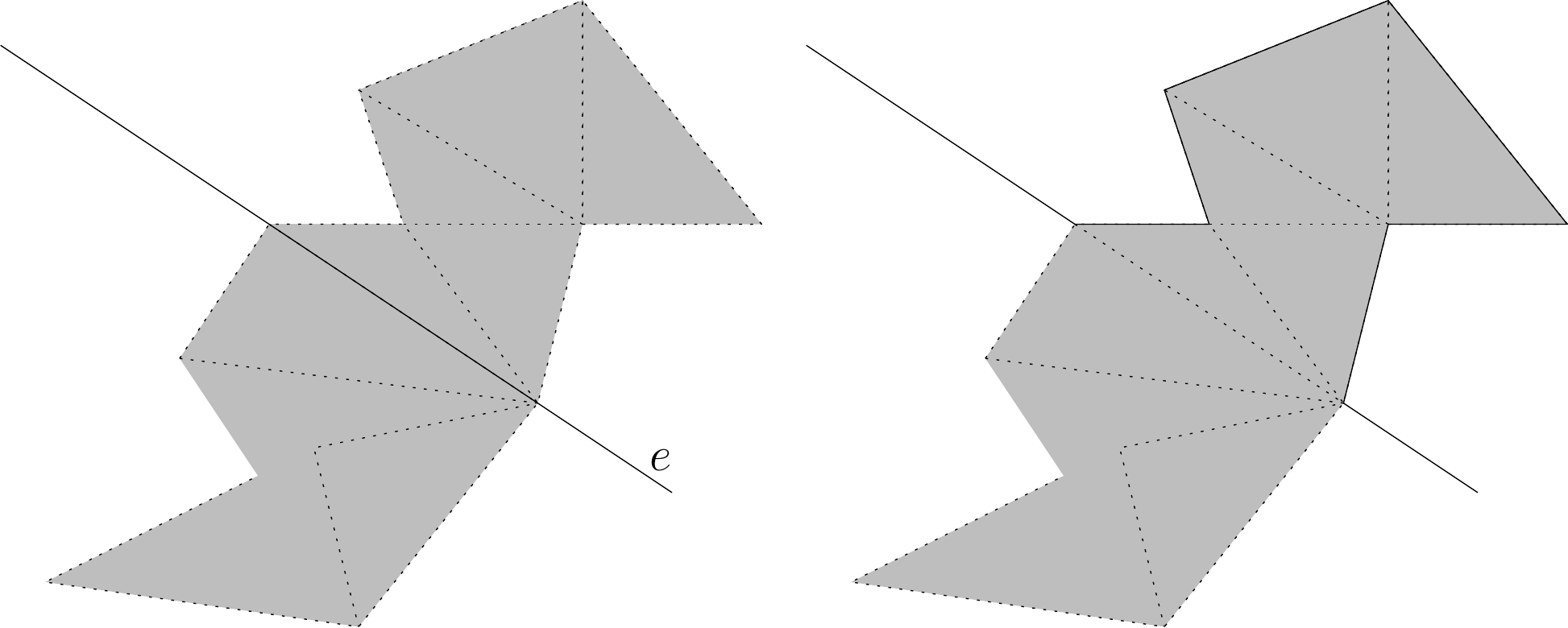} 
\par\end{centering}

\caption{\label{fig:circumvention-of-polygon}Circumvention of a polygon $P(e)$
\aad{(gray area). The dotted lines indicate the triangulation of the polygon.}
The left figure shows an edge $e$ in the cut $\Gamma$ such that
$e$ crosses $P(e)$, but it does not cross any triangle from $\T(P(e))$.
The right figure shows the set of line segments
that replace $e$ in the cut $\Gamma'$, such that $P(e)$ is
circumvented.}
\end{figure}

\end{document}